\documentclass[11pt]{article}
\usepackage{amsmath, amsthm, amssymb, mathtools}
\usepackage{amsfonts}
\usepackage{amssymb}
\usepackage{bbm}
\usepackage{graphicx}
\usepackage{caption}
\usepackage{fullpage}
\usepackage{microtype}
\usepackage{hyperref}

\hypersetup{
    colorlinks=true,
    linkcolor=blue,
    filecolor=magenta,      
    urlcolor=cyan,
}

\newtheorem{theorem}{Theorem}

\newtheorem{corollary}[theorem]{Corollary}

\newtheorem{lemma}{Lemma}

\newtheorem{definition}[theorem]{Definition}

\newcommand{\ket}[1]{|#1\rangle}
\newcommand{\bra}[1]{\langle#1|}
\newcommand{\ketbra}[2]{\ket{#1}\!\bra{#2}} 
\newcommand{\norm}[1]{\left\|#1\right\|}
\newcommand{\pnorm}[2]{\left\|#2\right\|_#1}
\newcommand{\Tr}{\mbox{Tr}}
\newcommand{\polylog}{\mbox{polylog}}
\newcommand{\poly}{\mbox{poly}}
\newcommand{\jj}{\hat{\jmath}}

%------------------------------------------------------------------------------%

\begin{document}

\title{Efficient Quantum Algorithms for Simulating Lindblad Evolution\thanks{A preliminary version of this article appeared in {\it Proceedings of the 44th International Colloquium on Automata, Languages, and Programming (ICALP 2017)}, pages 17:1--17:14.}}

\author{Richard Cleve%
\thanks{Institute for Quantum Computing, University of Waterloo, Waterloo, Canada.}
\thanks{Cheriton School of Computer Science, University of Waterloo, Waterloo, Canada.}
\thanks{Canadian Institute for Advanced Research, Toronto, Canada.}
\qquad Chunhao Wang$^{\dag\ddag}$
}

\date{\empty}

\maketitle

\begin{abstract}
We consider the natural generalization of the Schr\"{o}dinger equation to \textit{Markovian open system dynamics}: the so-called the Lindblad equation.
We give a quantum algorithm for simulating the evolution of an $n$-qubit system for time $t$ within precision $\epsilon$.
If the Lindbladian consists of $\mathrm{poly}(n)$ operators that can each be expressed as a linear combination of $\mathrm{poly}(n)$ tensor products of Pauli operators then the gate cost of our algorithm is 
$O(t\, \mathrm{polylog}(t/\epsilon)\mathrm{poly}(n))$.
We also obtain similar bounds for the cases where the Lindbladian consists of local operators, and where the Lindbladian consists of sparse operators.
This is remarkable in light of evidence that we provide indicating that the above efficiency is impossible to attain by first expressing Lindblad evolution as Schr\"{o}dinger evolution on a larger system and tracing out the ancillary system: the cost of such a \textit{reduction} incurs an efficiency overhead of $O(t^2/\epsilon)$ even before the Hamiltonian evolution simulation begins.
Instead, the approach of our algorithm is to use a novel variation of the ``linear combinations of unitaries'' construction that pertains to channels.
\end{abstract}

%------------------------------------------------------------------------------%
\section{Introduction}

The problem of simulating the evolution of closed systems (captured by the Schr\"odinger equation) was proposed by Feynman~\cite{Fey82} in 1982 as a motivation for building quantum computers.
Since then, several quantum algorithms have appeared for this problem (see subsection~\ref{sec:previous} for references to these algorithms).
However, many quantum systems of interest are not closed but are well-captured by the Lindblad Master equation~\cite{lindblad1976,GKS1976}.
Examples exist in quantum physics \cite{leggett1987, weiss2012}, quantum chemistry \cite{may2008, nitzan2006}, and quantum biology \cite{dorner2012, huelga2013, mostame2012}.
Lindblad evolution also arises in quantum computing and quantum information in the context of entanglement preparation~\cite{kraus2008, kastoryano2011, reiter2016}, thermal state preparation~\cite{kastoryano2016}, quantum state engineering~\cite{verstraete2009}, and studying the noise of quantum circuits~\cite{magesan2013}.

We consider the computational cost of simulating the evolution of an $n$-qubit quantum state for time $t$ under the Lindblad Master equation
\begin{align}\label{eq:lindblad}
  \dot{\rho} = -i[H,\rho] + \sum_{j=1}^{m}\Bigl(L_j\rho L_j^{\dag} 
- \frac{1}{2}L_j^{\dag}L_j\rho - \frac{1}{2}\rho L_j^{\dag}L_j\Bigr),
\end{align}
(representing Markovian open system dynamics), where $H$ is a Hamiltonian and 
$L_1, \dots, L_{m}$ are linear operators.
By \textit{simulate the evolution}, we mean: provide a quantum circuit that computes the quantum channel corresponding to evolution by Eq.~\eqref{eq:lindblad} for time $t$ within precision $\epsilon$.
The quantum circuit must be independent of the input state, which is presumed to be unknown.
When $L_1 = \cdots = L_m = 0$, Eq.~\eqref{eq:lindblad} is the Schr\"odinger equation.

Eq.~\eqref{eq:lindblad} can be viewed as an idealization of the frequently occurring physical scenario where a quantum system evolves jointly with a large external environment in a manner where information dissipates from the system into the environment.
In quantum information theoretic terms, Lindblad evolution is a continuous-time process that, for any evolution time, is a quantum channel. 
Moreover, Lindblad evolution is \emph{Markovian} in the sense that, given the state at time $t$, for any 
$\delta > 0$, the state at time $t+\delta$ is a function of the state at time $t$ alone (i.e., is independent of the state before time~$t$).

Lindblad evolution can be intuitively thought of as Hamiltonian evolution in a larger system that includes an ancilla register, but where the ancilla register is being continually reset to its initial state.
To make this more precise, consider a time interval $[0,t]$, and divide it into $N$ subintervals of length $\frac{t}{N}$ each.
At the beginning of each subinterval, reset the state of the ancilla register to its initial state, and then let the joint system-ancilla evolve under a Hamiltonian $J$ and the system itself evolve under $H$.
Let the evolution time for $J$ be $\sqrt{t/N}$ and the evolution time for $H$ be $t/N$.
This process, illustrated in Fig.~\ref{fig:lindblad-as-unitary}, converges to the true Lindblad evolution as $N$ approaches $\infty$.
\begin{figure}[!ht]
	\centering
	\includegraphics[width=0.95\textwidth]{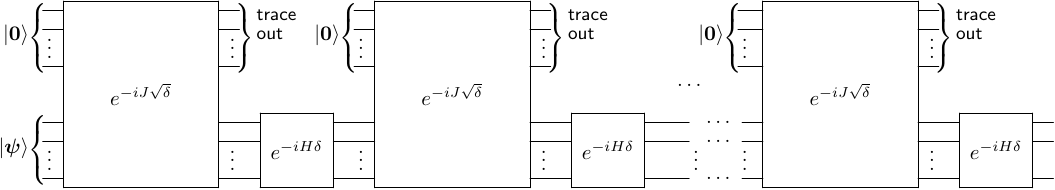}
	\caption{\small Lindblad evolution for time $t$ approximated by unitary operations.
	There are $N$ iterations and $\delta = t/N$. 
	This converges to Lindblad evolution as $N \rightarrow \infty$.	\label{fig:lindblad-as-unitary}}
\end{figure}
For the specific evolution described by Eq.~\eqref{eq:lindblad}, it suffices to set the ancilla register to $\mathbb{C}^{m+1}$ and the Hamiltonian $J$ to the block matrix
\begin{align}
J = 
\begin{pmatrix}
0 & L_1^{\dag} & \cdots & L_m^{\dag} \\
L_1 & 0 & \cdots & 0 \\
\vdots & \vdots & \ddots & \vdots \\
L_m & 0 & \cdots & 0 
\end{pmatrix}.
\end{align}

A remarkable property of this way of representing Lindblad evolution is that the rate at which the Hamiltonian $J$ evolves is effectively infinite:
Lindblad evolution for time $t/N$ is simulated by a process that includes evolution by $J$ for time $\sqrt{t/N}$, so the rate of the evolution scales as
\begin{align}
\frac{\sqrt{t/N}}{t/N} = \sqrt{\frac{N}{t}},
\end{align}
which diverges as $N \rightarrow \infty$.
Moreover, the total Hamiltonian evolution time of $J$ in Fig.~\ref{fig:lindblad-as-unitary} is $N \sqrt{t/N} = \sqrt{Nt}$, which also diverges.
In Appendix~\ref{appendix:total-is-infinite} we prove that, in general, the above scaling phenomenon is necessary for simulating time-independent Lindblad evolution in terms of time-independent Hamiltonian evolution along the lines of the overall structure of Fig.~\ref{fig:lindblad-as-unitary}.
In this sense, \emph{exact} Lindblad evolution for finite time does not directly correspond to Hamiltonian evolution for \emph{any} finite time.
On the other hand, it can be shown that if the scaling of $N$ is at least $t^3/\epsilon^2$ then the final state is an \textit{approximation} within $\epsilon$.
Note that then the corresponding total evolution time for $J$ scales as $\sqrt{(t^3/\epsilon^2)t} = t^2/\epsilon$.
Therefore, quantum algorithms that simulate Lindblad evolution by first applying the above reduction to Hamiltonian evolution and then efficiently simulating the Hamiltonian evolution are likely to incur scaling that is at least $t^2/\epsilon$.

Here we are interested in whether much more efficient simulations of Lindblad evolution are possible, such as $O(t\,\polylog(t/\epsilon))$.

%------------------------------------------------------------------------------%
\subsection{Previous work}\label{sec:previous}

\noindent\textbf{Simulating Hamiltonian evolution.}
Hamiltonian evolution (a.k.a.\ Schr\"odinger evolution) is the special case of Eq.~\eqref{eq:lindblad} where $L_j = 0$ for all $j$.
This simulation problem has received considerable attention since 
Feynman~\cite{Fey82} proposed this as a motivation for building quantum computers; see for example \cite{Llo96,AT03,Childs2004,BCCKS14,BCCKS15,BCG14,BCK15,kothari2014efficient,LowChuang16,PP16,BN2016}.
Some of the recent methods obtain a scaling that is 
$O(t\,\polylog(t/\epsilon) \mathrm{poly}(n))$, 
thereby exceeding what can be accomplished by the longstanding Trotter-Suzuki methods \cite{Suz91}.

\medskip

\noindent\textbf{Simulating Lindblad evolution.}
The natural generalization from closed systems to Markovian open systems in terms of the Lindblad equation has received much less attention.
Kliesch \textit{et al.}~\cite{kliesch2011dissipative} give a quantum algorithm for simulating Lindblad evolution in the case where each of $H, L_1, \dots, L_m$ can be expressed as a sum of local operators (i.e., which act on a constant number of qubits).
The cost of this algorithm with respect to $t$ and $\epsilon$ (omitting factors of $\mathrm{poly}(n)$) is $O(t^2/\epsilon)$.
In~\cite{CL16}, Childs and Li improve this to $O(t^{1.5}/\sqrt{\epsilon})$ 
and also give an $O((t^2/\epsilon)\polylog(t/\epsilon))$ query algorithm for the case where the operators in Eq.~\eqref{eq:lindblad} are sparse and represented in terms of an oracle.
Another result in \cite{CL16} is an $\Omega(t)$ lower bound for the query complexity for time $t$ when Eq.~\eqref{eq:lindblad} has $H = 0$ and $m=1$.

As far as we know, none of the previous algorithms for simulating Lindblad evolution has cost $O(t\,\polylog(t/\epsilon) \mathrm{poly}(n))$,
which is the performance that we attain.
Our results are summarized precisely in the next subsection (subsection~\ref{sec:new-result}).

We note that there are simulation algorithms that solve problems that are related to but different from ours, such as~\cite{DPCSC15}, which does not produce the final state; rather it simulates the expectation of an observable applied to the final state.
We do not know how to adapt these techniques to produce the unmeasured final state instead.

Finally, we note that there are interesting classical algorithmic techniques for simulating Lindblad evolution that are feasible when the dimension of the Hilbert space (which is $2^n$, for $n$ qubits) is not too large---but these do not carry over to the context of quantum algorithms (where $n$ can be large).
In the classical setting, since the state is known (and stored) explicitly, various ``unravellings'' of the process that are state-dependent can be simulated.
For example, the random variable corresponding to ``the next jump time'' (which is highly state-dependent) can be simulated.
In the context of quantum algorithms, the input state is unknown and cannot be measured without affecting it.

%------------------------------------------------------------------------------%
\subsection{New results}\label{sec:new-result}

Eq.~\eqref{eq:lindblad} can be written as $\dot{\rho} = \mathcal{L}[\rho]$,
where $\mathcal{L}$ is a \emph{Lindbladian}, defined as a mapping of the form
\begin{align}\label{eq:lindbladian}
\mathcal{L}[\rho] = -i[H,\ \rho] + \sum_{j=1}^{m}\Bigl(L_j\rho L_j^{\dag} 
- \frac{1}{2}L_j^{\dag}L_j\rho - \frac{1}{2}\rho L_j^{\dag}L_j\Bigr),
\end{align}
for operators $H,L_1, \dots, L_{m}$ on the Hilbert space $\mathcal{H}=\mathbb{C}^{2^n}$ ($n$ qubits) with $H$ Hermitian.
Evolution under Eq.~\eqref{eq:lindblad} for time $t$ corresponds to the quantum map $e^{\mathcal{L}t}$ (which is a channel for any $t \ge 0$).

Each of the operators $H,L_1, \dots, L_{m}$ corresponds to a $2^n \times 2^n$ matrix.
The simulation algorithm is based on a succinct specification of these matrices.
Our succinct specification is as a \emph{linear combination of $q$ Paulis}, defined as
\begin{align}\label{eq:H-lcu}
  H =& \sum_{k=0}^{q-1}\beta_{0k}V_{0k} \\
  \label{eq:jump-lcu}
  L_j =& \sum_{k=0}^{q-1}\beta_{jk}V_{jk},
\end{align}
where, for each $j\in\{0,\ldots,m\}$ and $k\in\{0,\ldots,q-1\}$,  $V_{jk}$ is an $n$-fold tensor product of Paulis ($I$, $\sigma_x$, $\sigma_y$, $\sigma_z$) and a scalar phase $e^{i\theta}$ ($\theta \in [0,2\pi]$), and $\beta_{jk} \ge 0$.

In the evolution $e^{\mathcal{L}t}$, it is possible to scale up
$\mathcal{L}$ by some factor while reducing $t$ by the same factor, i.e.,
$e^{\mathcal{L}t}[\rho] = e^{(c\mathcal{L})\frac{t}{c}}[\rho]$ for any $c>0$\footnote{$c\mathcal{L}$ denotes the mapping obtained from $\mathcal{L}$ with $H$ multiplied by $c$ and each $L_j$ multiplied by $\sqrt{c}$.}.
This reduces the simulation time but transfers the cost into the magnitude
of $\mathcal{L}$.
To normalize this cost, we define a norm 
based on the specification of $\mathcal{L}$.

Define the norm%
\footnote{For simplicity we use the terminology $\|\mathcal{L}\|_{\textsf{pauli}}$ even though the quantity is not directly a function of the mapping~$\mathcal{L}$.
However, $\|c\mathcal{L}\|_{\textsf{pauli}} = c\|\mathcal{L}\|_{\textsf{pauli}}$ if $c \mathcal{L}$ denotes the expression in Eq.~\eqref{eq:lindbladian} with the factor $c$ multiplied through.}
of a specification of a Lindbladian $\mathcal{L}$ as a linear product of Paulis as
\begin{align}\label{cost-lindbladian}
\|\mathcal{L}\|_{\textsf{pauli}} = \sum_{k=0}^{q-1}\beta_{0k} + \sum_{j=1}^m\Bigl(\,\sum_{k=0}^{q-1}\beta_{jk}\Bigr)^2.
\end{align}

Our main result is the following theorem.
\begin{theorem} \label{thm:main_thm}
Let $\mathcal{L}$ be a Lindbladian 
presented as a linear combination of $q$ Paulis.
Then, for any $t > 0$ and $\epsilon > 0$, there exists a quantum circuit of size 
  \begin{align}
    O\left(m^2q^2 \tau \frac{(\log(m q \tau/\epsilon)+n)\log(\tau/\epsilon)}{\log\log(\tau/\epsilon)}\right)
  \end{align}
that implements a quantum channel $\mathcal{N}$, such that
$\pnorm{\diamond}{\mathcal{N} - e^{\mathcal{L}t}} \le \epsilon$,
where $\tau = t\,\|\mathcal{L}\|_{\textrm{\sf pauli}}$ and $m$ is the number of jump operators in $\mathcal{L}$.
\end{theorem}

\noindent
\textbf{Remarks:}
\begin{enumerate}
\item
  The proof of Theorem~\ref{thm:main_thm} is in section~\ref{sec:main_thm}.
A main novel ingredient of the proof is Lemma~\ref{lemma:lcu}, concerning a variant of the ``linear combination of unitaries'' construction that is suitable for channels (explained in sections~\ref{sec:novel} and~\ref{sec:new-lcu}).
\item
The factor $\|\mathcal{L}\|_{\textrm{\sf pauli}}$ corresponding to the coefficients of the specification as a linear combination of Paulis is a natural generalization to the case of Lindbladians of a similar factor for Hamiltonians that appears in~\cite{BCCKS15}.
\item
  When $m, q \in \mathrm{poly}(n)$, the gate complexity in Theorem~\ref{thm:main_thm}
simplifies to
\begin{align}
O\left(\tau\,\frac{\log(\tau/\epsilon)^2}{\log\log(\tau/\epsilon)}\,\mathrm{poly}(n)\right).
\end{align}
\item
A Lindbladian $\mathcal{L}$ is \emph{local} if 
\begin{align}
H = \sum_{j=1}^{m'}H_j,
\end{align}
where $H_1,\dots, H_{m'}$ and also $L_1,\dots,L_m$ are local (i.e., they each act on a constant number of qubits).
A \emph{local specification} of $\mathcal{L}$ is as 
$H_1,\dots, H_{m'},L_1,\dots,L_m$ and we define its norm as
\begin{align}\label{eq:local-norm}
\|\mathcal{L}\|_{\mathsf{local}} = 
\sum_{j=1}^{m'}\|H_j\| + \sum_{j=1}^{m}\|L_j\|^2.
\end{align}
For local Lindbladians, Theorem~\ref{thm:main_thm} reduces to the following.
\begin{corollary}
If $\mathcal{L}$ is a local Lindbladian then the gate complexity for simulating  $e^{\mathcal{L}t}$ with precision $\epsilon$ is
	\begin{align}
	  O\left((m+m')^2\,\tau\,\frac{\log((m+m')\tau/\epsilon)\log(\tau/\epsilon)}{\log\log(\tau/\epsilon)}\right),
	\end{align}
where $\tau = t\,\|\mathcal{L}\|_{\mathsf{local}}$.
\end{corollary}
\item
We also consider \emph{sparse} Lindbladians (see \cite{CL16} for various definitions, extending definitions and specifications of sparse Hamiltonians \cite{AT03}).
Here, we define a Lindbladian to have $d$-sparse operators if $H,L_1,\dots,L_m$ each have at most $d$ non-zero entries in each row/column.
A \emph{sparse specification} of such a Lindbladian $\mathcal{L}$ is as a black-box that provides the positions and values of the non-zero entries of each row/column of $H,L_1,\dots,L_m$ via queries.

Define the norm of any specification of a Lindbladian in terms of operators $H,L_1,\dots,L_m$ as
\begin{align}\label{eq:sparse-norm}
\|\mathcal{L}\|_{\mathsf{ops}} = 
\|H\| + \sum_{j=1}^{m}\|L_j\|^2.
\end{align}

The query complexity and gate complexity for simulating $d$-sparse Lindbladians 
$\mathcal{L}$ are
\begin{align}
O\bigl(\tau\, \mathrm{polylog}(mq\tau/\epsilon)\mathrm{poly}(d,n)\bigr), 
\end{align}
where $\tau = t\|\mathcal{L}\|_{\mathsf{ops}}$.
We sketch the analysis in section~\ref{sec:sparse}.
\item
We expect some of the methodologies in~\cite{BCCKS15,BCK15,LowChuang16,PP16} to be adaptable to the Lindblad evolution simulation problem (in conjunction with our variant of the LCU construction and oblivious amplitude amplification), but have not investigated this.
\end{enumerate}

%------------------------------------------------------------------------------%
\section{Brief summary of novel techniques}\label{sec:novel}

As noted in subsection~\ref{sec:previous}, for the case of Hamiltonian evolution, a series of recent quantum algorithms whose scaling is $O(t\,\polylog(t/\epsilon))$ has been discovered which improve on what has been accomplished using the longstanding Trotter-Suzuki decomposition.
One of the main tools that these algorithms employ is a remarkable circuit construction that is based on a certain decomposition of \emph{unitary} operations (or \emph{near-unitary} operations) into a linear combination of unitaries.
We refer to this construction as the \textit{standard LCU method}.

For the case of Lindblad evolution, the operations that arise are \emph{channels} that are not generally unitary.
Some channels are \emph{mixed unitary}, which means that they can be expressed as a randomly chosen unitary (say with probabilities $p_0,\dots,p_{m-1}$ on the unitaries $U_0,\dots,U_{m-1}$).
For such channels, the standard LCU method can be adapted along the lines of first randomly sampling 
$j \in \{0,\dots,m-1\}$ and then applying the standard LCU method to the unitary $U_j$.
However, there exist channels that are \emph{not} mixed unitary---and such channels can arise from the Lindblad equation.
A different reductionist approach is to express these channels in the Stinespring form, as unitary operations that act on a larger system, and then apply the standard LCU method to \emph{those} unitaries; however, as we explain in subsection~\ref{section:standard-lcu}, this approach performs poorly.
We take a different approach that does not involve a reduction to the unitary case: we have developed a new variant of the LCU method that is for channels.
This is explained in section~\ref{sec:new-lcu}. 

Another new technique that we employ is an Oblivious Amplitude Amplification algorithm for isometries (as opposed to unitaries), which is noteworthy because a reductionist approach based on extending isometries to unitaries does not work.
Roughly speaking, this is because our LCU construction turns out to produce an isometry (corresponding to a purification of the channel); however, it does not produce a unitary extension of that isometry.

%------------------------------------------------------------------------------%
\subsection{The standard LCU method performs poorly on Stinespring dilations}\label{section:standard-lcu}

Here we show in some technical detail why the standard LCU method performs poorly for Stinespring dilations of channels.
The standard LCU method (explained in detail in subsection~2.1 of~\cite{kothari2014efficient}) for a unitary $V$ expressible as a linear combination of unitaries as
$V = \alpha_0 U_0 + \cdots + \alpha_{m-1}U_{m-1}$
is a circuit construction $W$ that has the property
\begin{align}\label{eq:unitary-lcu}
W \ket{0}\ket{\psi} = \sqrt{p}\ket{0}V\ket{\psi} + \sqrt{1-p}\ket{\Phi^{\perp}}
\end{align}
where $\ket{\Phi^{\perp}}$ has zero amplitude in states with first register $\ket{0}$ (i.e., $(\ketbra{0}{0} \otimes I) \ket{\Phi^{\perp}} = 0$) and
\begin{align}\label{eq:success-standard-LCU}
p = \frac{1}{(\sum_{j=0}^{m-1} \alpha_j)^2}
\end{align}
is the success probability (that arises if the first indicator register is measured).

Consider the \emph{amplitude damping channel}, 
which has two Kraus operators with the following LCU decompositions
\begin{align*}
A_0 
=
\begin{bmatrix}
1 \ & 0 \\
0 & \sqrt{1 - \delta}
\end{bmatrix}
&=
\alpha_{00} \begin{bmatrix}
1 \ & 0 \\
0 & 1
\end{bmatrix}
+
\alpha_{01} \begin{bmatrix}
1 \ & 0 \\
0 & -1
\end{bmatrix} \\
A_1 
=
\begin{bmatrix}
0 & \sqrt{\delta} \\
0 & 0 
\end{bmatrix}
&=
\alpha_{10}
\begin{bmatrix}
0 \ & 1 \\
1 & 0
\end{bmatrix}
+
\alpha_{11}
\begin{bmatrix}
0 \ & 1 \\
-1 & 0
\end{bmatrix},
\end{align*}
where $\alpha_{00} = \frac{1 + \sqrt{1 - \delta}}{2}$, $\alpha_{01} = \frac{1 - \sqrt{1 - \delta}}{2}$, $\alpha_{10} = \frac{\sqrt{\delta}}{2}$, $\alpha_{11} = \frac{\sqrt{\delta}}{2}$.
Evolving an amplitude damping process for time $t$ yields this channel with $\delta = 1 - e^{-t}$.
When $t \ll 1$, $\delta \approx t$, $\alpha_{00} \approx 1 - t/4$, and 
$\alpha_{01} \approx t/4$.

A Stinespring dilation of $V$ and its LCU decomposition can be derived from the above LCU decompositions of $A_0$ and $A_1$ as
\begin{align*}
V =
\begin{bmatrix}
1 & 0 & 0 & 0 \\
0 & \sqrt{1-\delta} & -\sqrt{\delta} & 0  \\
0 & \sqrt{\delta} & \sqrt{1-\delta} & 0 \\
0 & 0 & 0 & 1 
\end{bmatrix}
&=
\alpha_{00}\!
\begin{bmatrix}
1 & 0 & 0 & 0 \\
0 & 1 & 0 & 0  \\
0 & 0 & 1 & 0 \\
0 & 0 & 0 & 1 
\end{bmatrix} +
\alpha_{01}\!
\begin{bmatrix}
1 & 0 & 0 & 0 \\
0 & -1 & 0 & 0  \\
0 & 0 & -1 & 0 \\
0 & 0 & 0 & 1 
\end{bmatrix} \\
&+
\alpha_{10}\!
\begin{bmatrix}
0 & 0 & 0 & -1 \\
0 & 0 & -1 & 0  \\
0 & 1 & 0 & 0 \\
1 & 0 & 0 & 0
\end{bmatrix} +
\alpha_{11}\!
\begin{bmatrix}
0 & 0 & 0 & 1 \\
0 & 0 & -1 & 0  \\
0 & 1 & 0 & 0 \\
-1 & 0 & 0 & 0
\end{bmatrix}.
\end{align*}
Applying the standard LCU method here results in a success probability (computed from Eq.~\eqref{eq:success-standard-LCU}) of
\begin{align*}
\frac{1}{\bigl(\alpha_{00}+\alpha_{01}+\alpha_{10}+\alpha_{11}\bigr)^2} =
\frac{1}{\bigl(1 + \sqrt{\delta}\bigr)^2} = 1 - 2\sqrt{\delta} + \Theta(\delta).
\end{align*}
For small time evolution $t$, the failure probability is $\Theta(\sqrt t)$, which is prohibitively expensive.
It means that the process can be repeated at most $\Theta(1/\sqrt{t})$ times until the cumulative failure probability becomes a constant.
The amount of evolution time (of the amplitude damping process) that this corresponds to is
\begin{align*}
\Theta\Bigl(\frac{1}{\sqrt{t}}\Bigr) \cdot t = \Theta(\sqrt{t}),
\end{align*}
which is subconstant as $t \rightarrow 0$.
This creates a problem in the general Lindblad simulation.

Our new LCU method for channels (explained in section~\ref{sec:new-lcu}) achieves the higher success probability
\begin{align*}
\frac{1}{\bigl(\alpha_{00}+\alpha_{01}\bigr)^2+\bigl(\alpha_{10}+\alpha_{11}\bigr)^2} =
\frac{1}{1 + \delta} = 1 - \delta + \Theta(\delta^2).
\end{align*}
For small time evolution $t$, the failure probability is $\Theta(t)$.
Now, the process can be repeated $\Theta(1/t)$ times until the cumulative failure probability becomes a constant, which corresponds to evolution time
\begin{align*}
\Theta\Bigl(\frac{1}{t}\Bigr) \cdot t = \Theta(1),
\end{align*}
which is constant as $t \rightarrow 0$.
Since this is consistent with what arises in the algorithm of simulating Hamiltonian evolution in~\cite{BCCKS14,BCCKS15}, the methodologies used therein, with various adjustments, can be used to obtain the simulation bounds.

%------------------------------------------------------------------------------%
\section{New LCU method for channels and completely positive maps}
\label{sec:new-lcu}
Let $A_0, \dots, A_{m-1}$, linear operators on $\mathbb{C}^{2^n}$
($n$-qubit states), be the Kraus operators of a channel. Suppose that, for
each $j \in \{0,\dots,m-1\}$, we have a decomposition of $A_j$ as a linear
combination of unitaries in the form 
\begin{align}\label{lcu-channel}
A_j = \sum_{k=0}^{q-1} \alpha_{jk} U_{jk},
\end{align}
where, for each $j \in \{0,\dots,m-1\}$ and $k \in \{0,\dots,q-1\}$, $\alpha_{jk} \ge 0$ and $U_{jk}$ is unitary.

The objective is to implement the channel in terms of the implementations of $U_{jk}$'s.
We will describe a circuit $W$ and fixed state $\ket{\mu}$ such that, for any $n$-qubit state $\ket{\psi}$,
\begin{align}\label{eq:property-of-W}
	W\ket{0}\ket{\mu}\ket{\psi} =
	\sqrt{p}\ket{0}\left(\sum_{j=0}^{m-1}\ket{j}A_j\ket{\psi}\right) +
	\sqrt{1-p}\ket{\Phi^{\bot}},
\end{align}
where $(\ketbra{0}{0} \otimes I \otimes I) \ket{\Phi^{\bot}} = 0$ and 
\begin{align}\label{old-lcu}
p = \frac{1}{\sum_{j=0}^{m-1} (\sum_{k=0}^{q-1} \alpha_{jk})^2}\ 
\end{align}
is called the success probability parameter (which is realized if the first register is measured).
Note that the isometry $\ket{\psi} \mapsto \sum_{j=0}^{m-1}\ket{j}A_j\ket{\psi}$ is the channel in purified form.

The circuit $W$ is in terms of two gates.
One gate is a \emph{multiplexed-$U$ gate}, denoted by $\mbox{multi-}U$ such that, for all $j \in \{0,\dots,m-1\}$ and $k \in \{0,\dots,q-1\}$, 
\begin{align}\label{eq:mU}
\mbox{multi-}U\ket{k}\ket{j}\ket{\psi} = \ket{k}\ket{j}U_{jk}\ket{\psi}.
\end{align}
The other gate is a \emph{multiplexed-$B$ gate}, denoted by $\mbox{multi-}B$, such that, for all $j \in \{0,\dots,m-1\}$,
\begin{align}\label{eq:mB}
\mbox{multi-}B\ket{0}\ket{j} = \left(\frac{1}{\sqrt{s_j}} \sum_{k=0}^{q-1}
\sqrt{\alpha_{jk}}\ket{k}\right)\ket{j},
\end{align}
where 
\begin{align}\label{eq:s_j}
s_j = \sum_{k=0}^{q-1} \alpha_{jk}.
\end{align}
Define the state $\ket{\mu}$ (in terms of $s_0,\dots,s_{m-1}$ from Eq.~\eqref{eq:s_j})
\begin{align}\label{eq:mu}
\ket{\mu} = \frac{1}{\sqrt{\sum_{j=1}^{m-1} s_j^2}} \sum_{j=0}^{m-1} s_j \ket{j}.
\end{align}
Define the circuit $W$ (acting on $\mathbb{C}^{q}\otimes\mathbb{C}^{m}\otimes\mathbb{C}^{2^n}$) as  
\begin{align}\label{W}
  W = (\mbox{multi-}B^{\dag}\otimes I)\mbox{multi-}U(\mbox{multi-}B\otimes I).
\end{align}
The LCU construction with the circuit $W$ with its initial state 
$\ket{0}\otimes\ket{\mu}\otimes\ket{\psi}$ is illustrated in Fig.~\ref{gadget}. 
\begin{figure}[!ht]
	\centering
	\includegraphics[width=.4\textwidth]{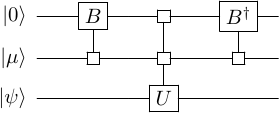}
	\caption{\small The circuit $W$ for simulating a channel using the new LCU method.\label{gadget}}
\end{figure}
In this figure, we refer to the first register as the \emph{indicator register} (as it indicates whether the computation succeeds at the end of this operation), the second register as the \emph{purifier register} (as it is used to purify the channel when the computation succeeds), and the third register as the \emph{system register} (as it contains the state being evolved).

In the following lemma, Eq.~\eqref{eq:property-of-W} is shown to apply where 
$A_0,\dots,A_{m-1}$ are arbitrary linear operators (i.e., Kraus operators of a completely positive map that is not necessarily trace-preserving).
If the map is also trace-preserving then 
$\sum_{j=0}^{m-1}\ket{j}A_j\ket{\psi}$ and $\ket{\Phi^{\perp}}$ are normalized states and the success probability parameter $p$ is the actual success probability realized if the first register is measured; otherwise, these need not be the case.
In subsequent sections, we will apply this lemma in a context where the trace-preserving condition is \emph{approximately} satisfied.
\begin{lemma}\label{lemma:lcu}
Let $A_0, \ldots, A_{m-1}$ be the Kraus operators of a completely positive map. Suppose that each $A_j$ can be written in the form of Eq.~\eqref{lcu-channel}. Let $\mbox{multi-}U$, $\mbox{multi-}B$, $W$, and $\ket{\mu}$ be defined as above.
Then applying the unitary operator $W$ on any state of the form $\ket{0}\ket{\mu}\ket{\psi}$ produces the state 
	\begin{align*}
	\sqrt{p}\ket{0}\left(\sum_{j=0}^{m-1} \ket{j}A_j\ket{\psi}\right) +
	\sqrt{1-p}\ket{\Phi^{\bot}},
	\end{align*}
	where $(\ketbra{0}{0} \otimes I \otimes I) \ket{\Phi^{\bot}} = 0$, and
	\begin{align*}
	p = \frac{1}{\sum_{j=0}^{m-1} \left(\sum_{k=0}^{q-1}\alpha_{jk}\right)^2}\ .
	\end{align*}
\end{lemma}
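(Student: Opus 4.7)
The plan is a direct computation: push the state $\ket{0}\ket{\mu}\ket{\psi}$ through the three stages of $W = (\mbox{multi-}B^{\dag}\otimes I)\,\mbox{multi-}U\,(\mbox{multi-}B\otimes I)$ and then read off the component whose indicator register is $\ket{0}$. The remainder, which is orthogonal to $\ket{0}$ in the indicator register by construction, will be defined to be $\sqrt{1-p}\,\ket{\Phi^{\bot}}$; since $\text{multi-}B$ and $\text{multi-}U$ are unitaries and the input is a unit vector, this decomposition is automatically consistent.

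First I would apply $\mbox{multi-}B\otimes I$. Expanding $\ket{\mu}$ via Eq.~\eqref{eq:mu} and then using Eq.~\eqref{eq:mB}, the $s_j$ factor coming from $\ket{\mu}$ combines with the $1/\sqrt{s_j}$ inside $\mbox{multi-}B$ to leave weight $\sqrt{s_j\alpha_{jk}}$ on the branch $\ket{k}\ket{j}$:
\begin{align*}
(\mbox{multi-}B\otimes I)\ket{0}\ket{\mu}\ket{\psi}
= \frac{1}{\sqrt{\sum_{j'}s_{j'}^2}}\sum_{j=0}^{m-1}\sum_{k=0}^{q-1}\sqrt{s_j\,\alpha_{jk}}\,\ket{k}\ket{j}\ket{\psi}.
\end{align*}
Applying $\mbox{multi-}U$ next replaces $\ket{\psi}$ by $U_{jk}\ket{\psi}$ term by term, by Eq.~\eqref{eq:mU}.

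For the final stage, I extract the $\ket{0}$ component of the indicator register after $\mbox{multi-}B^{\dag}\otimes I$. Taking the adjoint of Eq.~\eqref{eq:mB} gives the key inner product $\bra{0}\bra{j'}\mbox{multi-}B^{\dag}\ket{k}\ket{j} = \delta_{jj'}\sqrt{\alpha_{jk}/s_j}$ (the $\alpha_{jk}$ being nonnegative reals). Plugging this in, the weight on $\ket{0}\ket{j}U_{jk}\ket{\psi}$ becomes $\sqrt{s_j\alpha_{jk}}\cdot\sqrt{\alpha_{jk}/s_j} = \alpha_{jk}$, so
\begin{align*}
(\ket{0}\!\bra{0}\otimes I\otimes I)\,W\ket{0}\ket{\mu}\ket{\psi}
= \frac{1}{\sqrt{\sum_{j'}s_{j'}^2}}\sum_{j=0}^{m-1}\ket{0}\ket{j}\sum_{k=0}^{q-1}\alpha_{jk}\,U_{jk}\ket{\psi}
= \sqrt{p}\,\ket{0}\sum_{j=0}^{m-1}\ket{j}A_j\ket{\psi},
\end{align*}
where the inner sum reassembles $A_j\ket{\psi}$ by Eq.~\eqref{lcu-channel} and $p = 1/\sum_{j'}s_{j'}^2$ by Eq.~\eqref{eq:s_j}, yielding exactly the claimed identity.

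There is no real obstacle; the proof is a bookkeeping exercise. The only point worth flagging is why $\ket{\mu}$ must have the specific weights $s_j$ (rather than, say, uniform): this choice is what guarantees that the $1/\sqrt{s_j}$ in $\mbox{multi-}B$ and $\sqrt{\alpha_{jk}/s_j}$ in $\mbox{multi-}B^{\dag}$ together produce weight exactly $\alpha_{jk}$ on the $U_{jk}\ket{\psi}$ branch, which is what the LCU decomposition of $A_j$ demands. Note that no trace-preserving condition on $\{A_j\}$ enters the calculation, so the argument applies to arbitrary completely positive maps as claimed.
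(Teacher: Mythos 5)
Your proof is correct and is essentially the same calculation as the paper's: the paper first evaluates $W\ket{0}\ket{j}\ket{\psi}$ for each basis state $\ket{j}$ (recovering the standard LCU identity with amplitude $1/s_j$ on the success branch) and then invokes linearity over the superposition $\ket{\mu}$, whereas you push the full superposition through the three stages at once; the two bookkeeping orders are equivalent. Your closing observations---that the weights $s_j$ in $\ket{\mu}$ are exactly what cancel the $1/\sqrt{s_j}$ normalizations to leave amplitude $\alpha_{jk}$, and that no trace-preservation is used---match the paper's intent.
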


\begin{proof}
	First consider the state $\ket{0}\ket{j}\ket{\psi}$ for any
	$j\in\{0,\ldots,m-1\}$. Applying $W$ on this state is the standard LCU method \cite{kothari2014efficient}:
	\begin{align}
	  W\ket{0}\ket{j}\ket{\psi} =& (\mbox{multi-}B^{\dag}\otimes
	I)\mbox{multi-}U(\mbox{multi-}B\otimes
	I)\ket{0}\ket{j}\ket{\psi} \\
	=& \frac{1}{\sqrt{s_j}}(\mbox{multi-}B^{\dag}\otimes
	I)\mbox{multi-}U\left(\sum_{k=0}^{q-1}\sqrt{\alpha_{jk}}\ket{k}\right)\ket{j}\ket{\psi}
	\\
	=& \frac{1}{\sqrt{s_j}}(\mbox{multi-}B^{\dag}\otimes I)
	\left(\sum_{k=0}^{q-1}\sqrt{\alpha_{jk}}\ket{k}\ket{j}U_{jk}\ket{\psi}\right)
	\\
	=&
	\frac{1}{s_j}\ket{0}\ket{j}\left(\sum_{k=0}^{q-1}\alpha_{jk}U_{jk}\ket{\psi}\right)
	+ \sqrt{\gamma_j}\ket{\Phi_j^{\bot}} \\
	=& \frac{1}{s_j}\ket{0}\ket{j}A_j\ket{\psi}
	+ \sqrt{\gamma_j}\ket{\Phi_j^{\bot}},
	\end{align}
	where $\ket{\Phi_j^{\bot}}$ is a state satisfying $(\ketbra{0}{0}\otimes
	I\otimes I)\ket{\Phi_j^{\bot}} = 0$ and $\gamma_j$ is some normalization
	factor.
	
	Up to this point, if the indicator register were measured and $\ket{0}$ were observed as the ``success'' case as in the standard LCU method, then the state of the purifier and the system register collapses to $\ket{j}A_j\ket{\psi}$. However, this is not a meaningful quantum sate, as it only captures one Kraus operator of a quantum map. Now we use this specially designed quantum state $\ket{\mu}$ to obtain the desired purification state.
	We use the superposition $\ket{\mu}$ instead of $\ket{j}$ in the
	second register then, by linearity, we have
	\begin{align}
	W\ket{0}\ket{\mu}\ket{\psi} =
	\sqrt{p}\ket{0}\left(\sum_{j=0}^{m-1}\ket{j}A_j\ket{\psi}\right) +
	\sqrt{1-p}\ket{\Phi^{\bot}},
	\end{align}
	where $(\ketbra{0}{0}\otimes I\otimes I)\ket{\Phi^{\bot}} = 0$ and 
	$p = \frac{1}{\sum_{j=0}^{m-1} s_j^2}$.
\end{proof}

%------------------------------------------------------------------------------%
\section{Overview of the main result, Theorem~\ref{thm:main_thm}}\label{sec:main_thm}

In this section we show how to apply our new LCU method in order to prove our main result, Theorem~\ref{thm:main_thm}. 

We begin by reviewing some basic notation and definitions pertaining to superoperators.
For a finite-dimensional Hilbert space $\mathcal{X}$, let $L(\mathcal{X})$ denote the set set of all linear operators mapping $\mathcal{X}$ to $\mathcal{X}$.
We consider mappings that are linear operators from $L(\mathcal{X})$ to $L(\mathcal{X})$ (sometimes referred to as superoperators).
For such a mapping $\mathcal{T}$, we write $\mathcal{T}[X]$ to denote the result of $\mathcal{T}$ applied to $X \in L(\mathcal{X})$ (the square brackets $[\cdot]$ around the argument are to highlight that the mapping acts on linear operators).
Also, we use $\mathbbm{1}_{\mathcal{X}}$ to denote the identity map on $L(\mathcal{X})$.

For a linear mapping on $\mathcal{T} : L(\mathcal{X}) \rightarrow L(\mathcal{X})$, the \emph{induced trace norm}%
\footnote{Some authors refer to the induced trace norm as the 1$\to$1 norm} of $\mathcal{T}$ \cite{Watrous-notes}
is defined as
\begin{align}
  \pnorm{1}{\mathcal{T}} = \max_{\pnorm{1}{Q}=1}\pnorm{1}{\mathcal{T}[Q]},
\end{align}
where $\|Q\|_1$ denotes the trace norm of $Q \in L(\mathcal{X})$.
The \emph{diamond norm} of $\mathcal{T}$ is defined as
\begin{align}
  \pnorm{\diamond}{\mathcal{T}} = \pnorm{1}{\mathcal{T}\otimes\mathbbm{1}_{\mathcal{X}}}.
\end{align}

Next, we show that, for Lindbladians specified by Eqns.~\eqref{eq:lindbladian}, \eqref{eq:H-lcu} and \eqref{eq:jump-lcu}, the quantities $\|\mathcal{L}\|_{\textsf{pauli}}$ (defined in Eq.~\eqref{cost-lindbladian})
and 
$\|\mathcal{L}\|_{\textsf{ops}}$ (defined in Eq.~\eqref{eq:sparse-norm})
satisfy
\begin{align}\label{eq:norm-inequalities}
\|\mathcal{L}\|_{\diamond} \le 2\|\mathcal{L}\|_{\textsf{ops}} \le 2\|\mathcal{L}\|_{\textsf{pauli}}.
\end{align}
For the first inequality in Eq.~\eqref{eq:norm-inequalities}, 
note that $\|\mathcal{L}\|_1 \le 2\|\mathcal{L}\|_{\textsf{ops}}$ 
holds by the triangle inequality and the fact that, for any $X \in L(\mathcal{X})$ such that $\pnorm{1}{X} = 1$,
\begin{align}
\|[H,X]\|_1 &\le 2\|H\| \\
\bigl\|L_j X L_j^{\dag}\bigr\|_1 &\leq
    \bigl\|L_j\bigr\| \bigl\| X \bigr\|_1 \bigl\|L_j^{\dag}\bigl\|=\bigl\|L_j\bigr\|^2.
\end{align}
Then, since $\|M \otimes I\| = \|M\|$ for any linear operator $M$,  the first inequality in Eq.~\eqref{eq:norm-inequalities} follows.
The second inequality in Eq.~\eqref{eq:norm-inequalities} follows from the fact that, if $H$ and $L_0,\dots,L_{m-1}$ are specified as in Eqns.~\eqref{eq:H-lcu} and \eqref{eq:jump-lcu}, then 
\begin{align}
\|H\| \le \sum_{k=0}^{q-1}\beta_{0k} \ \ \ \ \mbox{and } \quad 
\|L_j\| \le \sum_{k=0}^{q-1}\beta_{jk}, \quad \text{ for all } j \in \{1, \ldots, m\}.
\end{align}

We are now ready to present the details of our construction for the proof of Theorem~\ref{thm:main_thm}. The overall structure is similar to that in~\cite{BCCKS14} and~\cite{BCCKS15}, with the main novel ingredient being our variant of the LCU construction (explained in section~\ref{sec:new-lcu}) and also a variant of oblivious amplitude amplification for isometries. For clarity, the details are organized into the following subsections, whose content is summarized as:
\begin{enumerate}
\item
  In subsection~\ref{section:appro-map}, we describe a simple mapping $\mathcal{M}_{\delta}$ in terms of Kraus operators that are based on the operators in $\mathcal{L}$.
For small $\delta$, $\mathcal{M}_{\delta}$ is a good approximation of $e^{\mathcal{L}\delta}$.
\item
  In subsection~\ref{section:simulate-maps}, we show how to simulate the mapping $\mathcal{M}_{\delta}$ in the sense of Lemma~\ref{lemma:lcu}, with success probability parameter $1-O(\delta)$.
\item
  In subsection~\ref{section:segments}, we show how to combine $r$ simulations of 
$\mathcal{M}_{O(1/r)}$ so as to obtain cumulative success probability parameter $1/4$.
Conditional on success, this produces a good approximation of constant-time Lindblad evolution. 
\item
  In subsection~\ref{section:OAA}, we show how to apply a modified version of oblivious amplitude amplification to unconditionally simulate an approximation of constant-time Lindblad evolution.
\item
  In subsection~\ref{section:hamming}, we show how to reduce the number of multiplexed Pauli gates by a concentration bound on the amplitudes associated with nontrivial Pauli gates.
\item
  In subsection~\ref{section:gate-count}, we bound the total number of gates and combine the simulations for segments in order to complete the proof of Theorem~\ref{thm:main_thm}. 
\end{enumerate}

\subsection{A simple mapping $\mathcal{M}_{\delta}$ that approximates $e^{\mathcal{L}\delta}$ for small $\delta$}
\label{section:appro-map}
Here, we show how to approximate Lindblad evolution for small time $\delta$, namely $e^{\mathcal{L}\delta}$, by a mapping $\mathcal{M}_{\delta}$ that can be described in terms of $m+1$ Kraus operators, where the precision of the approximation is $O(\delta^2)$.

Following an approach described in~\cite{Preskill-notes}, define
the quantum map $\mathcal{M}_{\delta}$ as 
\begin{align}
\mathcal{M}_{\delta} [Q] = \sum_{j=0}^{m} A_j Q A_j^{\dag},
\end{align}
where
\begin{align}\label{eq:instantanious-channel}
A_0 = I-\frac{\delta}{2}\sum_{j=1}^mL_j^{\dag}L_j-i\delta H \ \ \ \ 
\mbox{and, for $j \in \{1,\dots,m\}$,} \ \ \ \ 
A_j = \sqrt{\delta}L_j.
\end{align}

Note that, in general, $\mathcal{M}_{\delta}$ does not satisfy the trace-preserving condition for a quantum channel; however, it satisfies an approximate version of it:
\begin{align}
  \norm{\sum_{j=0}^mA_j^{\dag}A_j - I} 
  &= \norm{\frac{\delta^2}{4}\Biggl(\,\sum_{j=1}^mL_j^{\dag}L_j\Biggr)^2+\delta^2H^2} \\
  &\leq \delta^2 \norm{\Biggl(\,\sum_{j=1}^mL_j^{\dag}L_j + H\Biggr)^2} \\
  &\leq \delta^2\Biggl(\,\sum_{j=1}^m\norm{L_j^{\dag}L_j}+\norm{H}\Biggr)^2 \\
\label{eq:appro-krauss}
&= (\delta\pnorm{\textsf{ops}}{\mathcal{L}})^2.
\end{align}

We now show that 
\begin{align}\label{eq:Mvs.e^L}
\|\mathcal{M}_{\delta} - e^{\mathcal{L}\delta}\|_{\diamond} \le 
5(\delta \pnorm{\textsf{ops}}{\mathcal{L}})^2.
\end{align}
To do this, we introduce an intermediate quantum map, $\mathbbm{1}+\delta\mathcal{L}$ (mapping $\rho$ to $\rho+\delta\mathcal{L}[\rho]$), and show that 
\begin{align}\label{eq:Mvs.1+L}
\pnorm{\diamond}{\mathcal{M}_{\delta} - (\mathbbm{1} + \delta\mathcal{L})} \le (\delta\pnorm{\textsf{ops}}{\mathcal{L}})^2
\end{align}
and then Eq.~\eqref{eq:Mvs.e^L} follows from the fact that 
\begin{align}
\bigl\|(\mathbbm{1} + \delta\mathcal{L}) - e^{\mathcal{L}\delta}\bigr\|_{\diamond}
&\le (\delta\pnorm{\diamond}{\mathcal{L}})^2 \label{eq:1+Lvs.e^L} \\
&\le (2\delta\pnorm{\textsf{ops}}{\mathcal{L}})^2.
\end{align}
For completeness, Eq.~\eqref{eq:1+Lvs.e^L} is proven in Appendix~\ref{app:1plusL}.
In order to show Eq.~\eqref{eq:Mvs.1+L}, note that 
for any operator $Q$ on $\mathcal{H}\otimes\mathcal{K}$ with $\pnorm{1}{Q}=1$, \small
\begin{align}\label{eq:dist-intermediate-2}
\Bigl\|\bigl(\mathcal{M}_{\delta}\otimes\mathbbm{1}_{\mathcal{K}} - (\mathbbm{1}_{\mathcal{H}} &+ \delta\mathcal{L})\otimes\mathbbm{1}_{\mathcal{K}}\bigr) [Q]\Bigr\|_{1} \\
& =\Biggl\|\sum_{j=0}^m(A_j\otimes I)Q (A_j\otimes I)^{\dag} -
  (Q+\delta(\mathcal{L}\otimes\mathbbm{1}_{\mathcal{K}})[Q])\Biggr\|_1 \\
& =
\Biggl\|
\frac{\delta^2}{4}
\Bigl(\sum_{j=1}^m L_j^{\dag}L_j\otimes I\Bigr)
Q 
\Bigl(\sum_{j'=1}^m L_{j'}^{\dag}L_{j'}\otimes I\Bigr) 
- \frac{\delta^2}{2}i\sum_{j=1}^m(L_j^{\dag}L_j\otimes I)Q (H\otimes I)
\Biggr. \\
& \ \ \ \ \ \ \ \Biggl.
+\frac{\delta^2}{2}i(H\otimes I)Q\sum_{j=1}^m(L_j^{\dag}L_j\otimes I)+\delta^2(H\otimes I)Q (H\otimes I) \Biggr\|_1 \\
& \leq \delta^2\Biggl(
\Bigl\|\sum_{j=1}^m L_j^{\dag}L_j\otimes I\Bigr\|^2 + 2\Bigl\|H\otimes I\Bigr\|\Bigl\|\sum_{j=1}^m L_j^{\dag}L_j\otimes I\Bigr\|+\Bigl\|H\otimes I\Bigr\|^2\Biggr) \\
& \leq \delta^2\Biggl(
\Bigl\|\sum_{j=1}^m L_j^{\dag}L_j\otimes I\Bigr\|+\Bigl\|H\otimes I\Bigr\|\Biggr)^2 \\
& \leq \delta^2\left(\sum_{j=1}^m\norm{L_j}^2+\norm{H}\right)^2 \\
& \leq \delta^2\pnorm{\textsf{ops}}{\mathcal{L}}^2.
\end{align}
\normalsize
This completes the proof of Eq.~\eqref{eq:Mvs.e^L}.

%------------------------------------------------------------------------------%
\subsection{Approximating $\mathcal{M}_{\delta}$ by a quantum circuit via the new LCU method}
\label{section:simulate-maps}
Here we show how to construct a quantum circuit that computes an approximation of $\mathcal{M}_{\delta}$ along the lines of Eq.~\eqref{eq:property-of-W} using the new LCU method.

By substituting Eqns.~\eqref{eq:H-lcu} and \eqref{eq:jump-lcu} into Eq.~\eqref{eq:instantanious-channel}, we have
\begin{align}\label{eq:A-lcu-1}
  A_j &= \sqrt{\delta}\sum_{k=0}^{q-1}\beta_{jk}V_{jk} \text{, \quad for $j\in\{1,\ldots,m\}$, and} \\
  A_0 &= I-\frac{\delta}{2}\sum_{j=1}^m\left(\sum_{k=0}^{q-1}\beta_{jk}V_{jk}\right)^{\dag}\left(\sum_{l=0}^{q-1}\beta_{jl}V_{jl}\right) - i\delta\sum_{k=0}^{q-1}\beta_{0k}V_{0k} \\
  \label{eq:A-lcu-2} 
  &= I + \frac{\delta}{2}\sum_{j=1}^m\sum_{k=0}^{q-1}\sum_{l=0}^{q-1}\beta_{jk}\beta_{jl}\left(-V_{jk}^{\dag}V_{jl}\right) + \delta\sum_{k=0}^{q-1}\beta_{0k}\left(-iV_{0k}\right).
\end{align}
Note that Eqns.~\eqref{eq:A-lcu-1} and \eqref{eq:A-lcu-2} are expressing the Kraus operators $A_0, \dots, A_m$ as tensor products of Paulis (i.e., they are of the form of Eqns.~\eqref{eq:H-lcu} and~\eqref{eq:jump-lcu}).
Therefore, by Lemma~\ref{lemma:lcu}, the circuit construction of $W$ in Fig.~\ref{gadget} and the state $\ket{\mu}$ from Eq.~\eqref{eq:mu} satisfy the following property.
For any state $\ket{\psi}$, 
\begin{align}
W\ket{0}\ket{\mu}\ket{\psi} 
=  \sqrt{p}\ket{0}\left(\sum_{j=0}^m\ket{j}A_j\ket{\psi}\right)+\sqrt{1-p}\ket{\Phi^{\bot}},
\end{align}
where $\ket{\Phi^{\bot}}$ satisfies 
$(\ketbra{0}{0}\otimes I\otimes I)\ket{\Phi^{\bot}} = 0$ and 
\begin{align}
p =\frac{1}{\sum_{j=0}^ms_j^2},
\end{align}
where 
\begin{align}\label{eq:sj}
  s_j &= \sqrt{\delta}\sum_{k=0}^{q-1}\beta_{jk}, \text{\quad for $j\in\{1,\ldots,m\}$, \ \ and} \\
  \label{eq:s0}
  s_0 &= 1+\frac{\delta}{2}\sum_{j=1}^m\sum_{k=0}^{q-1}\sum_{l=0}^{q-1}\beta_{jk}\beta_{jl} + \delta\sum_{k=0}^{q-1}\beta_{0k}.
\end{align}
(The values of $s_0,\dots,s_m$ are directly from Eqns.~\eqref{eq:A-lcu-1} and~\eqref{eq:A-lcu-2}.)

To simplify the expression for the success probability parameter, it is convenient to define 
\begin{align}\label{eq:c_j}
c_j = \sum_{k=0}^{q-1}\beta_{jk},
\end{align}
for $j\in\{0,\ldots,m\}$.
Then we can rewrite Eqns.~\eqref{eq:sj} and \eqref{eq:s0} as
\begin{align}\label{eq:sj-c}
  s_j &= \sqrt{\delta}c_j, \text{\quad for $j\in\{1,\ldots,m\}$, \ \ \ and} \\
  s_0 &= 1+\frac{\delta}{2}\sum_{j=1}^mc_j^2+\delta c_0 \label{eq:s0-c}
\end{align}
and
\begin{align}
  p & = \frac{1}{\sum_{j=0}^ms_j^2}  \\
  & =
  \frac{1}{\left(1+\frac{\delta}{2}\sum_{j=1}^mc_j^2+\delta c_0\right)^2+\sum_{j=1}^mc_j^2\delta} \\
& =
  \frac{1}{1+2\delta\sum_{j=1}^mc_j^2+2\delta c_0+ \delta^2\left(\frac{1}{2}\sum_{j=1}^mc_j^2+ c_0\right)^2}\\
& = \frac{1}{1 + 2\delta\Bigl(\sum_{j=1}^mc_j^2+c_0\Bigr) 
+ \frac{\delta^2}{4}\Bigl(\sum_{j=1}^mc_j^2 + 2c_0\Bigr)^2} \\
& = \frac{1}{1 + 2\delta\pnorm{\textsf{pauli}}{\mathcal{L}} 
+ \frac{\delta^2}{4}\bigl(\pnorm{\textsf{pauli}}{\mathcal{L}} + c_0\bigr)^2} \label{eq:probability-parameter} \\[1.5mm]
& = 1-2 \delta \pnorm{\textsf{pauli}}{\mathcal{L}} 
- O\bigl(\delta^2\pnorm{\textsf{pauli}}{\mathcal{L}}^2\bigr).
\end{align}

Note that, since $\mathcal{M}_{\delta}$ is only an approximate channel, the success probability parameter $p$ does not correspond to the actual probability of outcome $0$ if the indicator register is measured; 
however, it can be shown that $p$ is within $O(\delta^2)$ of the actual success probability.
We do not show this here; our analysis will be in terms of the cumulative error arising in circuit constructions in the subsequent sections (which consist of several instances of the construction from this section).

%------------------------------------------------------------------------------%
\subsection{Simulating $r$ iterations of $\mathcal{M}_{O(1/r)}$ with constant success probability}
\label{section:segments}
\label{section:constant-prob}
In this section we iterate the construction from the previous section $r$ times, with $\delta = O(1/r)$.

The resulting success probability parameter associated with $\mathcal{M}_{\delta}^{(r)}$ is $p^r = \bigl(1 - O(1/r)\bigr)^r$, which converges to a constant.
We can tune the parameter $\delta$ so that $p^r = 1/4$ holds exactly.
This is accomplished by setting $p = 4^{-1/r}$ and then solving for $\delta$ in Eq.~\eqref{eq:probability-parameter}, yielding the positive solution
\begin{align}\label{eq:setting-of-delta}
\delta &=
\frac{-\pnorm{\textsf{pauli}}{\mathcal{L}} + \sqrt{\pnorm{\textsf{pauli}}{\mathcal{L}}^2+\frac{1}{4}\bigl(\pnorm{\textsf{pauli}}{\mathcal{L}}+c_0\bigr)^2\bigl(4^{1/r} - 1\bigr)}}{\frac{1}{4}\bigl(\pnorm{\textsf{pauli}}{\mathcal{L}}+c_0\bigr)^2}\\[1.5mm]
&= \Biggl(\frac{\ln(2)}{\pnorm{\textsf{pauli}}{\mathcal{L}}}\Biggr)\frac{1}{r} + O\Bigl(\frac{1}{r^2}\Bigr).
\end{align}
 
The circuit that implements $\mathcal{M}_{\delta}^{(r)}$ uses an initial state $(\ket{0}\ket{\mu})^{\otimes r}\ket{\psi}$, which can be reordered to $\ket{0}^{\otimes r}\ket{\mu}^{\otimes r}\ket{\psi}$.
It consists of $r$ instances of $W$, each with \emph{separate} indicator and purifier registers, but with \emph{the same} system register.
Let
$\widehat{W}$ denote this unitary operator (consisting of $r$ applications of $W$ on different indicator and purifier registers).
For each 
$\jj =j_0\ldots j_{r-1} \in\{0,\ldots,m\}^{r}$, define $\widehat{A}_{\jj}$ as
\begin{align}
\widehat{A}_{\jj} =
A_{j_0}\cdots A_{j_{r-1}}.
\end{align}
We can conclude that
\begin{align}
  \widehat{W}\Bigl(\ket{0}^{\otimes r}\ket{\mu}^{\otimes
    r}\ket{\psi}\Bigr) 
&=
    \sqrt{p^r}\ket{0}^{\otimes r}\Biggl(\,\sum_{\jj\in\{0,\ldots,m\}^{r}}\ket{\jj}\widehat{A}_{\jj}\ket{\psi}\Biggr)
    + \sqrt{1-p^r}\ket{\widehat{\Phi}^{\bot}} \\
&= \frac{1}{2}\ket{0}^{\otimes
r}\Biggl(\,\sum_{\jj\in\{0,\ldots,m\}^{r}}\ket{\jj}\widehat{A}_{\jj}\ket{\psi}\Biggr)
+ \frac{\sqrt{3}}{2}\ket{\widehat{\Phi}^{\bot}},\label{eq:D}
\end{align}
where $\ket{\widehat{\Phi}^{\bot}}$ satisfies
$(\ketbra{0}{0}^{\otimes r}\otimes I^{\otimes r}\otimes
I)\ket{\widehat{\Phi}^{\bot}}=0$.

Note that this conditionally simulates $\mathcal{M}_{\delta}^{(r)}$, and 
$\mathcal{M}_{\delta}^{(r)}$ approximates $e^{\mathcal{L}t}$, for 
\begin{align}\label{eq:def-t}
t &= r \delta \\
&= \frac{\ln(2)}{\pnorm{\textsf{pauli}}{\mathcal{L}}} + O\Bigl(\frac{1}{r}\Bigr).
\end{align}
The approximation is in the sense
\begin{align}\label{eq:dia-dist-1}
  \bigl\|\mathcal{M}_{\delta}^{(r)}-e^{\mathcal{L}t}\bigr\|_{\diamond} = O\Bigl(\frac{1}{r}\Bigr).
\end{align}

If the desired evolution time is such that $t\pnorm{\textsf{pauli}}{\mathcal{L}} < \ln(2)$ then the success probability parameter resulting from this approach is larger than $1/4$; however, it can be diluted to be exactly $1/4$ using a method described in \cite{BCCKS14} that employs an additional qubit as part of the indicator register.

Next we show how to use oblivious amplitude amplification to achieve
perfect success probability.

%------------------------------------------------------------------------------%
\subsection{Oblivious amplitude amplification for isometries}
\label{section:OAA}
There are two hurdles for applying oblivious amplitude amplification in our
construction. First, the purified quantum state corresponding to the
success case is not a normalized quantum state, as the Kraus operators of
$\mathcal{M}_{\delta}^{(r)}$ do not satisfy the trace-preserving
condition. Second, the operation corresponding to the success case is an
isometry (rather than a unitary), because part of the registers in the initial
state is restricted to $(\ket{\mu_0}\cdots\ket{\mu_{m-1}})^{\otimes r}$.

The second hurdle is resolved by using different projectors in the
amplitude amplification operator. For the first hurdle, we show that it
only causes a small error. To begin with, we examine how far it is for the
Kraus operators to satisfy the trace-preserving condition, and this
quantity will be used later in the proof. By repeatedly applying Eq.~\eqref{eq:appro-krauss}, we have
\begin{align}
\Biggl\|\,\sum_{\jj\in\{0,\ldots,m\}^r}\widehat{A}_{\jj}^{\dag}\widehat{A}_{\jj} -I \,\Biggr\| 
&=
\Biggl\|\,\sum_{j_0\cdots j_{r-1}\in\{0,\ldots,m\}^r}(A_{j_{r-1}}^{\dag}\cdots A_{\jj_0}^{\dag})
	(A_{j_0}\cdots A_{j_{r-1}}) - I\,\Biggr\| \\[1mm]
&\le 
r\bigl(\delta\pnorm{\textsf{pauli}}{\mathcal{L}}\bigr)^2 \\[1.5mm]
\label{eq:seg-appro-krauss}
&= (\ln(2))^2/r + O(1/r^2),
\end{align}
where the second equality follows from substituting the value of $\delta$ from Eq.~\eqref{eq:setting-of-delta}.

Before we present the oblivious amplitude amplification construction, we 
introduce more notations for convenience.
For any $\ket{\psi}$, let $\ket{\Psi}$ denote the initial state
\begin{align}
\ket{\Psi}:=\ket{\widehat{0}}\ket{\widehat{\mu}}\ket{\psi},
\end{align}
where $\ket{\widehat{0}} = \ket{0}^{\otimes r}$, and $\ket{\widehat{\mu}}
= \ket{\mu}^{\otimes r}$.
Let $\ket{\Phi}$ denote the desired
purification state, i.e.,
\begin{align}
\ket{\Phi} = \ket{\widehat{0}}\left(\sum_{\jj\in\{0,\ldots,m\}^{r}}\ket{\jj}\widehat{A}_{\jj}\ket{\psi}\right).
\end{align}
Let $P_0:=\ketbra{\widehat{0}}{\widehat{0}}\otimes I\otimes I$ and $P_1 :=
\ketbra{\widehat{0}}{\widehat{0}}\otimes\ketbra{\widehat{\mu}}{\widehat{\mu}}\otimes
I$ be two projectors.
By Eq.~\eqref{eq:D}, we have
\begin{align}\label{eq:D1}
  \widehat{W}\ket{\Psi} = \frac{1}{2}\ket{\Phi} +
\frac{\sqrt{3}}{2}\ket{\Phi^{\bot}},
\end{align}
for some state $\ket{\Phi^{\bot}}$ satisfying $P_0\ket{\Phi^{\bot}} = 0$.
Define the unitary operator
\begin{align}
  F =-\widehat{W}(I-2P_1)\widehat{W}^{\dag}(I-2P_0)\widehat{W}
\end{align}
as the oblivious amplitude amplification operator.
We summarize the result in the
following lemma.

\begin{lemma} \label{lemma:OAA}
	For any state $\ket{\psi}$, Let $\ket{\Psi}$, $\ket{\Phi}$, and $F$ be
	defined as above. Then
	\begin{align*}
	\norm{F\ket{\Psi}
		- \ket{\Phi}} = O(1/r).
	\end{align*}
\end{lemma}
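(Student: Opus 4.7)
The plan is to unfold $F\ket{\Psi}$ by hand, verify the usual oblivious-amplitude-amplification cancellations that would yield $F\ket{\Psi}=\ket{\Phi}$ exactly in the ideal trace-preserving case, and then localize the deviation into a single error vector whose norm is controlled by Eq.~\eqref{eq:seg-appro-krauss}. Throughout, let $V$ denote the linear map $\ket{\psi}\mapsto \sum_{\jj}\ket{\jj}\widehat{A}_{\jj}\ket{\psi}$, so that $\ket{\Phi}=\ket{\widehat{0}}V\ket{\psi}$ and $V^{\dag}V = \sum_{\jj}\widehat{A}_{\jj}^{\dag}\widehat{A}_{\jj}$.

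First, since $P_0\ket{\Phi}=\ket{\Phi}$ and $P_0\ket{\Phi^{\bot}}=0$, Eq.~\eqref{eq:D1} gives $P_0\widehat{W}\ket{\Psi}=\tfrac{1}{2}\ket{\Phi}$, so $(I-2P_0)\widehat{W}\ket{\Psi}=\widehat{W}\ket{\Psi}-\ket{\Phi}$ and hence $\widehat{W}^{\dag}(I-2P_0)\widehat{W}\ket{\Psi}=\ket{\Psi}-\widehat{W}^{\dag}\ket{\Phi}$. This is the same first-reflection step as in standard OAA.

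The crucial computation is $P_1\widehat{W}^{\dag}\ket{\Phi}$. I would compute it by taking the inner product with an arbitrary state $\ket{\widehat{0}}\ket{\widehat{\mu}}\ket{\psi'}$ and invoking Eq.~\eqref{eq:D1} with $\ket{\psi}$ replaced by $\ket{\psi'}$: the cross term vanishes because $(\bra{\widehat{0}}\otimes I)\ket{\Phi^{\bot}_{\psi'}}=0$, leaving $\bra{\widehat{0}}\bra{\widehat{\mu}}\bra{\psi'}\widehat{W}^{\dag}\ket{\Phi}=\tfrac{1}{2}\bra{\psi'}V^{\dag}V\ket{\psi}$. Thus $P_1\widehat{W}^{\dag}\ket{\Phi}=\tfrac{1}{2}\ket{\widehat{0}}\ket{\widehat{\mu}}V^{\dag}V\ket{\psi}=\tfrac{1}{2}\ket{\Psi}+\tfrac{1}{2}\ket{\eta}$, where $\ket{\eta}:=\ket{\widehat{0}}\ket{\widehat{\mu}}(V^{\dag}V-I)\ket{\psi}$ satisfies $\|\ket{\eta}\|\le \|V^{\dag}V-I\| = O(1/r)$ by Eq.~\eqref{eq:seg-appro-krauss}.

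Combining, and using $P_1\ket{\Psi}=\ket{\Psi}$, I get $(I-2P_1)(\ket{\Psi}-\widehat{W}^{\dag}\ket{\Phi})=-\widehat{W}^{\dag}\ket{\Phi}+\ket{\eta}$, and finally $F\ket{\Psi}=-\widehat{W}(-\widehat{W}^{\dag}\ket{\Phi}+\ket{\eta})=\ket{\Phi}-\widehat{W}\ket{\eta}$. Since $\widehat{W}$ is unitary, $\|F\ket{\Psi}-\ket{\Phi}\|=\|\ket{\eta}\|=O(1/r)$, as claimed. The only place where the approximation enters is the identification $P_1\widehat{W}^{\dag}\ket{\Phi}\approx\tfrac{1}{2}\ket{\Psi}$, which would be an exact equality if the Kraus operators satisfied trace preservation; the main (mild) obstacle is simply being careful that no other error terms slip in, and in particular that the orthogonality $P_0\ket{\Phi^{\bot}_{\psi'}}=0$ (which holds for all $\ket{\psi'}$ by Lemma~\ref{lemma:lcu}) still eliminates the cross term even when $V$ is only approximately an isometry.
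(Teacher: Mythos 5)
Your proof is correct, and it reaches the conclusion by a genuinely more direct route than the paper. Both arguments unfold $F\ket{\Psi}$ identically and both hinge on the same intermediate fact, namely $P_1\widehat{W}^{\dag}\ket{\Phi} = \tfrac{1}{2}\ket{\Psi} + O(1/r)$. The paper establishes this indirectly: it introduces an auxiliary state $\ket{\Psi^{\bot}}$ with $\widehat{W}\ket{\Psi^{\bot}} = \tfrac{\sqrt{3}}{2}\ket{\Phi} - \tfrac{1}{2}\ket{\Phi^{\bot}}$ and proves $P_1\ket{\Psi^{\bot}} = O(1/r)$ as a separate lemma, via the operator $Q = (\bra{\widehat{0}}\bra{\widehat{\mu}}\otimes I)\widehat{W}^{\dag}P_0\widehat{W}(\ket{\widehat{0}}\ket{\widehat{\mu}}\otimes I)$, showing all its expectation values equal $\tfrac{1}{4}+O(1/r)$ and hence $Q \approx \tfrac{1}{4}I$. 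You instead compute $P_1\widehat{W}^{\dag}\ket{\Phi}$ exactly, in closed form, as $\tfrac{1}{2}\ket{\widehat{0}}\ket{\widehat{\mu}}V^{\dag}V\ket{\psi}$; the key observation that the cross term $\langle\Phi^{\bot}_{\psi'}|\Phi\rangle$ vanishes because $P_0$ is a Hermitian projector annihilating $\ket{\Phi^{\bot}_{\psi'}}$ for every $\ket{\psi'}$ is exactly right, and the linearity of $\ket{\psi}\mapsto\ket{\Phi^{\bot}_{\psi}}$ guaranteed by Lemma~\ref{lemma:lcu} justifies quantifying over all $\ket{\psi'}$. What your approach buys is an exact identity $F\ket{\Psi} = \ket{\Phi} - \widehat{W}\ket{\eta}$ with a single explicit error vector $\ket{\eta}=\ket{\widehat{0}}\ket{\widehat{\mu}}(V^{\dag}V-I)\ket{\psi}$, so the error constant is directly the bound $(\ln 2)^2/r + O(1/r^2)$ from Eq.~\eqref{eq:seg-appro-krauss}, with no accumulation of unnamed $O(1/r)$ terms; it also dispenses with Lemma~\ref{lemma:OAA-2} entirely. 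The paper's route, while less explicit, mirrors the two-dimensional-subspace bookkeeping of standard oblivious amplitude amplification and isolates the "approximate isometry" issue into one reusable statement about $P_1\ket{\Psi^{\bot}}$.
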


To prove this lemma, we need the following lemma, which slightly extends the results of Lemma 2.3 in \cite{kothari2014efficient}.
\begin{lemma}\label{lemma:OAA-2}
	For any $\ket{\psi}$, let $\ket{\Psi}$, $\ket{\Phi}$, $\ket{\Phi^{\bot}}$,
	$P_0$, and $P_1$ be defined as above. Let $\ket{\Psi^{\bot}}$ be a state satisfying the equation
	\begin{align}\label{eq:D2}
	  \widehat{W}\ket{\Psi^{\bot}} =
	\frac{\sqrt{3}}{2}\ket{\Phi} - \frac{1}{2}\ket{\Phi^{\bot}}.
	\end{align}
	Then $P_1\ket{\Psi^{\bot}} = O(1/r)$.
\end{lemma}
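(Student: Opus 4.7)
The plan is to exploit that the sole obstruction to exact oblivious amplitude amplification is the approximate trace-preservation of the aggregated Kraus operators, captured by the operator $E := \sum_{\jj}\widehat{A}_{\jj}^{\dag}\widehat{A}_{\jj} - I$, whose norm is bounded by Eq.~\eqref{eq:seg-appro-krauss} as $\|E\| \le r(\delta\|\mathcal{L}\|_{\textsf{pauli}})^2 = O(1/r)$. Rather than manipulating $P_1\ket{\Psi^{\bot}}$ as a vector, I would collapse it to a scalar inner product and show that this scalar is essentially a matrix element of $E$.

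First write $P_1\ket{\Psi^{\bot}}=\ket{\widehat{0}}\ket{\widehat{\mu}}\ket{\chi}$ for a (possibly unnormalized) system-register vector $\ket{\chi}$; then $\|P_1\ket{\Psi^{\bot}}\|=\|\chi\|$ and $\|\chi\|^2=\braket{\widehat{0}\widehat{\mu}\chi}{\Psi^{\bot}}$, so it suffices to bound this inner product. The same computation that established Eq.~\eqref{eq:D} applies verbatim with $\ket{\chi}$ in place of $\ket{\psi}$, giving $\widehat{W}\ket{\widehat{0}\widehat{\mu}\chi} = \tfrac{1}{2}\ket{\Phi(\chi)} + \tfrac{\sqrt{3}}{2}\ket{\widehat{\Phi}^{\bot}(\chi)}$, where $\ket{\Phi(\chi)}:=\ket{\widehat{0}}\sum_{\jj}\ket{\jj}\widehat{A}_{\jj}\ket{\chi}$ lies in $\mathrm{range}(P_0)$ and $P_0\ket{\widehat{\Phi}^{\bot}(\chi)}=0$. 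The structural fact I will lean on is that both $\ket{\Phi}$ and $\ket{\Phi(\chi)}$ sit in $\mathrm{range}(P_0)$ while both $\ket{\Phi^{\bot}}$ and $\ket{\widehat{\Phi}^{\bot}(\chi)}$ sit in $\mathrm{range}(I-P_0)$, so every cross-subspace inner product vanishes.

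Using unitarity of $\widehat{W}$ together with Eq.~\eqref{eq:D2}, I would expand $\braket{\widehat{0}\widehat{\mu}\chi}{\Psi^{\bot}}=\bra{\widehat{W}\,\widehat{0}\widehat{\mu}\chi}\widehat{W}\ket{\Psi^{\bot}}$ and use the vanishing cross terms to obtain $\|\chi\|^2 = \tfrac{\sqrt{3}}{4}\braket{\Phi(\chi)}{\Phi} - \tfrac{\sqrt{3}}{4}\braket{\widehat{\Phi}^{\bot}(\chi)}{\Phi^{\bot}}$. The second summand has no obvious bound in terms of $E$, so I would derive a second linear identity by applying the same reasoning to $\braket{\chi}{\psi}=\braket{\widehat{0}\widehat{\mu}\chi}{\widehat{0}\widehat{\mu}\psi}$ via Eq.~\eqref{eq:D1}, yielding $\braket{\chi}{\psi} = \tfrac{1}{4}\braket{\Phi(\chi)}{\Phi} + \tfrac{3}{4}\braket{\widehat{\Phi}^{\bot}(\chi)}{\Phi^{\bot}}$. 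Solving this second identity for the unwanted inner product and substituting collapses the system to $\|\chi\|^2 = \tfrac{1}{\sqrt{3}}\bigl(\braket{\Phi(\chi)}{\Phi}-\braket{\chi}{\psi}\bigr) = \tfrac{1}{\sqrt{3}}\bra{\chi}E\ket{\psi}$, and Cauchy--Schwarz together with $\|E\|=O(1/r)$ finishes the argument.

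The main obstacle is recognizing in advance that a single inner-product identity cannot close the argument: the natural computation for $\braket{\widehat{0}\widehat{\mu}\chi}{\Psi^{\bot}}$ inevitably leaves behind the uncontrolled quantity $\braket{\widehat{\Phi}^{\bot}(\chi)}{\Phi^{\bot}}$, and one must set up the paired computation involving $\braket{\chi}{\psi}$ so that this quantity can be eliminated in favor of the controllable deviation $E$. Once the two identities are in place, the rest is mechanical.
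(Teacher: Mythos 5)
Your proof is correct and is essentially the paper's own argument in scalar form: the paper introduces the Gram operator $Q=\left(\bra{\widehat{0}}\bra{\widehat{\mu}}\otimes I\right)\widehat{W}^{\dag}P_0\widehat{W}\left(\ket{\widehat{0}}\ket{\widehat{\mu}}\otimes I\right)$, shows $Q=\tfrac{1}{4}I+O(1/r)$ from its quadratic form (which is exactly $\tfrac{1}{4}\bra{\psi}(I+E)\ket{\psi}$ with your $E$, bounded via Eq.~\eqref{eq:seg-appro-krauss}), and compares this with the exact identity $Q\ket{\psi}=\tfrac{1}{4}\ket{\psi}+\tfrac{\sqrt{3}}{4}\left(\bra{\widehat{0}}\bra{\widehat{\mu}}\otimes I\right)\ket{\Psi^{\bot}}$ obtained from Eqns.~\eqref{eq:D1} and~\eqref{eq:D2}. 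Your two inner-product identities followed by elimination and Cauchy--Schwarz reproduce the same computation (down to the constant $1/\sqrt{3}$ in the final bound), so the two proofs differ only in whether the bookkeeping is done at the operator or the scalar level.
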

\begin{proof}
	Define the operator
	\begin{align}
	Q =\left(\bra{\widehat{0}}\bra{\widehat{\mu}}\otimes
	  I\right)\widehat{W}^{\dag}P_0\widehat{W}\left(\ket{\widehat{0}}\ket{\widehat{\mu}}\otimes
	I\right).
	\end{align}
	For any state $\ket{\psi}$,
	\begin{align}
	\bra{\psi}Q\ket{\psi} =
	\norm{P_0\widehat{W}\left(\ket{\widehat{0}}\ket{\widehat{\mu}}\ket{\psi}\right)}^2 =
	\norm{P_0\left(\frac{1}{2}\ket{\Phi}+\frac{\sqrt{3}}{2}\ket{\Phi^{\bot}}\right)}^2
	= \norm{\frac{1}{2}\ket{\Phi}}^2 = \frac{1}{4} + O(1/r).
	\end{align}
	The last equality holds because $\norm{\ket{\Phi}}^2=1+O(1/r)$, which
	follows from Eq.~\eqref{eq:seg-appro-krauss}.
	Therefore, all the eigenvalues of $Q$ are $\frac{1}{4} + O(1/r)$, and
	we can write
	\begin{align}\label{eq:Q1}
	Q = \frac{1}{4}I+O\Bigl(\frac{1}{r}\Bigr).
	\end{align}
	Now, for any $\ket{\psi}$, we have
	\begin{align}
	Q\ket{\psi} 
	&= \left(\bra{\widehat{0}}\bra{\widehat{\mu}}\otimes
  I\right)\widehat{W}^{\dag}P_0\widehat{W}\left(\ket{\widehat{0}}\ket{\widehat{\mu}}\ket{\psi}\right) \\
  &= \frac{1}{2}\left(\bra{\widehat{0}}\bra{\widehat{\mu}}\otimes I\right)\widehat{W}^{\dag}\ket{\Phi}
	\\
	&= \frac{1}{2}\left(\bra{\widehat{0}}\bra{\widehat{\mu}}\otimes
	I\right)\Bigl(\frac{1}{2}\ket{\Psi}+\frac{\sqrt{3}}{2}\ket{\Psi^{\bot}}\Bigr)
	\\
	\label{eq:Q2}
	&= \frac{1}{4}\ket{\psi} +
	\frac{\sqrt{3}}{4}\left(\bra{\widehat{0}}\bra{\widehat{\mu}}\otimes
	I\right)\ket{\Psi^{\bot}}.
	\end{align}
	The third equality follows from Eqns.~\eqref{eq:D1} and \eqref{eq:D2}. On the
	other hand, by Eq.~\eqref{eq:Q1}, we have
	\begin{align}\label{eq:Q3}
	Q\ket{\psi} = \frac{1}{4}\ket{\psi} + O\Bigl(\frac{1}{r}\Bigr).
	\end{align}
	By Eqns.~\eqref{eq:Q2} and \eqref{eq:Q3}, we have
	\begin{align}
	\left(\bra{\widehat{0}}\bra{\widehat{\mu}}\otimes I\right)\ket{\Psi^{\bot}} =
	O\Bigl(\frac{1}{r}\Bigr),
	\end{align}
	which implies $P_1\ket{\Psi^{\bot}} = O(1/r)$.
\end{proof}

Now we are ready to prove Lemma~\ref{lemma:OAA}. The proof uses the methods in \cite{BCCKS15}.
\begin{proof}[Proof of Lemma~\ref{lemma:OAA}]
  First consider the operator $P_1\widehat{W}^{\dag}P_0\widehat{W}$. We
	have
	\begin{align}
	  P_1\widehat{W}^{\dag}P_0\widehat{W}\ket{\Psi} =
	  \frac{1}{2}P_1\widehat{W}^{\dag}\ket{\Phi} =
	\frac{1}{2}P_1\left(\frac{1}{2}\ket{\Psi} +
	\frac{\sqrt{3}}{2}\ket{\Psi^{\bot}}\right) =
	\frac{1}{4}\ket{\Psi}+O\Bigl(\frac{1}{r}\Bigr),
	\end{align}
    where the second equality follows from Eqns.~\eqref{eq:D1} and~\eqref{eq:D2} and the last equality follows from Lemma~\ref{lemma:OAA-2}.
    Then we
	have
	\begin{align}
	F\ket{\Psi} &= 
	(-\widehat{W}(I-2P_1)\widehat{W}^{\dag}(I-2P_0)\widehat{W}\ket{\Psi} \\
	&=
	  (2P_0\widehat{W}+\widehat{W}-4\widehat{W}P_1\widehat{W}^{\dag}P_0\widehat{W})\ket{\Psi} \\
	&= \ket{\Phi} + O(1/r).
	\end{align}
	Therefore
	$\norm{F\ket{\Psi} - \ket{\Phi}} = O(1/r)$.
\end{proof}

Since $\ket{\Phi}$ is a purification of $\mathcal{M}_{\delta}^{(r)}[\ketbra{\psi}{\psi}]$, Lemma~\ref{lemma:OAA} implies that the circuit of $F$ simulates a Stinespring dilation of $\mathcal{M}_{\delta}^{(r)}$ with error $O(1/r)$. This further implies that
\begin{align}\label{eq:dia-dist-2}
  \bigl\|\mathcal{N}-\mathcal{M}_{\delta}^{(r)}\bigr\|_{\diamond} = O(1/r),
\end{align}
where $\mathcal{N}$ is the quantum channel that $F$ implements by tracing out indicator and purifier registers.

%------------------------------------------------------------------------------%
\subsection{Concentration bound and encoding scheme}
\label{section:hamming}
From the previous sections, $r$ is a parameter that determines the precision, which is $O(1/r)$.
Up to this point, to simulate constant-time Lindblad evolution, the number of occurrences of the multiplexed-$U$ gate in our construction is $O(r)$.
In this subsection, we show how to reduce this to 
$O\bigl(\frac{\log(1/\epsilon)}{\log\log(1/\epsilon)}\bigr)$ while only introducing an additional error of $\epsilon$. 

It is important to note that, in light of Eqns.~\eqref{eq:A-lcu-1} and \eqref{eq:A-lcu-2}, there are $O(m)$ Kraus operators for $\mathcal{M}_{\delta}$ and each can be expressed as an LCU of $O(mq^2)$ terms.

Consider the initial state $(\ket{0}\ket{\mu})^{\otimes r}$ of the indicator and purifier registers. The multiplexed-$B$ gates applied on this state are $\mbox{multi-}B^{\otimes r}$.
Note that the first term in Eq.~\eqref{eq:A-lcu-2}
corresponds to the unitary $I$, which need not be performed. The circuit can be rearranged to bypass these operations, as in earlier papers on Hamiltonian evolution (see, for example, \cite{BCCKS14}).

More precisely, we compute the amplitude associated with this $I$ being performed.
For each instance of $W$ acting on $\ket{0}\ket{\mu}\ket{\psi}$, consider the state of indicator and purifier registers which control the multiplexed-$U$ gates (i.e., the state $\mbox{multi-}B\ket{0}\ket{\mu}$).
The state $\ket{0}\ket{0}$ corresponds to unitary $I$.
The amplitude of $\ket{0}\ket{0}$ is
\begin{align}
  \frac{s_{0}}{\sqrt{\sum_{j=0}^ms_j^2}}\frac{1}{\sqrt{s_{0}}}
  &= \sqrt{\frac{s_0}{\sum_{j=0}^ms_{j}^2}} \\[1mm]
&= \sqrt{
\frac{1 + \delta/2\sum_{j=1}^mc_j^2+\delta c_0}{1 + 2\delta\sum_{j=1}^mc_j^2+2\delta c_0 + \Theta(\delta^2(\sum_{j=1}^mc_j^2+c_0)^2)}} \\[2mm]
&= \sqrt{1 -\frac{3\delta}{2}\sum_{j=1}^mc_j^2-\delta c_0 + \Theta\Biggl(\delta^2\Bigl(\sum_{j=1}^mc_j^2+c_0\Bigr)^2\Biggr)},
\end{align}
where $s_j$ and $c_j$ are defined in Eqns.~\eqref{eq:sj}, \eqref{eq:s0}, and~\eqref{eq:c_j} ($j \in \{0,\dots,m\}$), and 
$\delta$ is defined in Eq.~\eqref{eq:setting-of-delta}.

%%% Removed:
%In above equations, we used the convention that for
%$j\in\{1,\ldots,m-1\}$, $c_j:=\sum_{k=0}^{q-1}\beta_{jk}$ and $c_0:=\sum_{k=0}^{q-1}\beta_{0k}$.
If this indicator and purifier registers are measured in the computational basis then the probability that the outcome is \emph{not} $(0,0)$ is
\begin{align}\label{eq:not0-prob1}
\frac{3\delta}{2}\sum_{j=1}^mc_j^2 +\delta c_0+\Theta\Biggl(\delta^2\Bigl(\sum_{j=1}^mc_j^2+c_0\Bigr)^2\Biggr) &\leq \frac{3}{2}\delta\pnorm{\textsf{pauli}}{\mathcal{L}} + \Theta\bigl(\delta^2\pnorm{\textsf{pauli}}{\mathcal{L}}^2\bigr) \\
&= \frac{3}{2r} + \Theta\Bigl(\frac{1}{r^2}\Bigr).
\end{align}
%%% Removed:
%where the last equalities in follow from the assumption that $t\pnorm{\textsf{pauli}}{\mathcal{L}} = \ln(2)$, which is a constant. 
Therefore, after the $\mbox{multi-}B$ acting on $\ket{0}\ket{\mu}\ket{\psi}$, if the indicator and purifier registers are measured, then the probability that the outcome is not $(0,0)$ is upper-bounded by $\frac{3}{2r}$. 

Roughly speaking, this is qualitatively the same scaling that arises in  Hamiltonian evolution simulation \cite{BCCKS14}, hence the same so-called Hamming weight cut-off applies.
Below is a more precise explanation of this.

  In the indicator and purifier registers, after applying $\mbox{multi-}B$, the computational basis states of the indicator and purifier registers are of the form $\ket{k_0,l_0}\cdots\ket{k_{r-1},l_{r-1}}$. Define the \emph{Hamming weight} of such a state as the number of $i \in \{0,\dots,r-1\}$ such that $(k_i,l_i) \neq (0,0)$. If the indicator and purifier registers are restricted to states that have Hamming weight at most $h$ then the circuit can be restructured so that there are only $h$ occurrences of the multiplexed-$U$ gates.

  Let $X_1, \ldots, X_r$ be $r$ independent random variables with $\Pr[X_j = 1] = \frac{3}{2r}$ and $\Pr[X_j = 0] = 1 - \frac{3}{2r}$ for all $j \in \{1, \ldots, r\}$. Consider the state of the indicator and purifier registers right before multiplexed-$U$ gates are applied (i.e., the state $(\mbox{multi-}B\ket{0}\ket{\mu})^{\otimes r}$). We are interested in how much amplitude is associated with the low Hamming weight states. This is related to the Chernoff bound (see \cite{MR1995}), i.e., for all $\delta > 0$, it holds that
\begin{align}
  \Pr\left[\sum_{j=1}^rX_j > (1+\delta)\mu\right] < \frac{e^{\delta\mu}}{(1+\delta)^{(1+\delta)\mu}},
\end{align}
where $\mu = \sum_{j=1}^r\Pr[X_j = 1] = \frac{3}{2}$. Letting $h = (1+\delta)\mu$, we have
\begin{align}
  \Pr\left[\sum_{j=1}^rX_j > h\right] < \frac{e^{h-\mu}\mu^h}{h^h} \leq \frac{(e\mu)^h}{h^h} = \frac{(3e/2)^h}{h^h}.
\end{align}
Therefore, the probability of the Hamming weight being larger than $h$ is upper bounded by $\epsilon^2$ provided  
\begin{align}
  h \in O\left(\frac{\log(1/\epsilon)}{\log\log(1/\epsilon)}\right).
\end{align}
From this, we conclude that the occurrences of the multiplexed-$U$ gates can be reduced to $O\left(\frac{\log(1/\epsilon)}{\log\log(1/\epsilon)}\right)$ with error $\epsilon$.

  The number of qubits for indicator and purifier registers in a segment is still $O(r\log(mq))$. We use the similar compression scheme as in \cite{BCCKS14} to reduce the number of qubits for indicator and purifier registers. The intuition is to only store the positions of components with non-zero Hamming weight, and we also need two other registers to store the actual state in this position. 

  The compression scheme works as follows. We consider the initial sate $(\ket{0}\ket{\mu})^{\otimes r}$. After applying the multiplexed-$B$ gates (before applying the multiplexed-$U$ gates), the state becomes $(\mbox{multi-}B\ket{0}\ket{\mu})^{\otimes r}$. It can be written as a linear combination of basis states in the form 
  \begin{align}
    \ket{k_0,l_0}\cdots\ket{k_{r-1},l_{r-1}},
  \end{align}
 where $k_i\in\{0, \ldots, mq^2\}$ and $l_i\in\{0,\ldots,m\}$ for $i\in\{0,\ldots r-1\}$. For each basis state of Hamming weight at most $h$, define a tuple $g:=(g_0,\ldots,g_{h-1})$. We use $g_0,\ldots,g_{h-1}$ to represent the number of consecutive zero-Hamming weight components. For a basis state of Hamming weight $h' < h$, just set $g_{h'}=\cdots=g_{h-1} = r$. The state $\ket{g}$, together with two additional registers storing these components $(k_i, l_i)$ with non-zero Hamming weights, encode a basis state with Hamming weight at most $h$.

  To illustrate how the compression scheme works, consider a basis state 
  \[
	\ket{k_0,l_0}\cdots\ket{k_{r-1},l_{r-1}}
  \]
  where the positions of components with non-zero Hamming weights are $i_0, \ldots, i_{h-1}$. The Hamming weight of this basis state is $h$. Let $g:=(g_0,g_1,\ldots,g_{h-1})=(i_0, i_1-i_0-1, \ldots, i_{h-1}-i_{h-2}-1)$. Then the state $\ket{g}\ket{k_{i_0},\ldots, k_{i_{h-1}}}\ket{l_{i_0},\ldots,l_{i_{h-1}}}$ encodes the original basis state in the sense that it represents
  \begin{align*}
	\ket{0,0}^{\otimes g_0}\ket{k_{i_0}, j_{i_0}}\cdots\ket{0,0}^{\otimes g_{h-1}}\ket{k_{i_{h-1}},j_{i_{h-1}}}\ket{0,0}^{\otimes (r-h-i_0-\cdots-i_{h-1})}.
  \end{align*}

  In this encoding scheme, the register for $\ket{g}$ requires $O(\log(r)h)$ qubits. The two additional registers require $O(h\log(mq))$ qubits. If we prepare the indicator and purifier registers in this encoded representation, the number of qubits is
  \begin{align}
    O((\log(r)h + \log(mq)h)) = O((\log(1/\epsilon)+\log(mq))h).
  \end{align} 

  We briefly summarize this encoding scheme as follows. In the original representation, the initial state of the indicator and purifier registers is $(\ket{0}\ket{\mu})^{\otimes r}$, and we apply multiplexed-$B$ gates $\mbox{multi-}B^{\otimes r}$ on this state before applying multiplexed-$U$ gates. In the encoded representation, the initial state is 
$\ket{0^a}\ket{0^b}\ket{0^c}$, where
$a =  O(\log(r)h)$, $b =  O(\log(mq)h)$, and $c =  O(\log(m)h)$; 
the first and second registers correspond to the indicator register in the original representation, and the third register corresponds to the purifier register in the original representation. We denote the encoding operator by $E$. The operator $E$ corresponds to the multiplexed-$B$ gates in the original representation, as we apply $E$ on the encoded initial state before applying multiplexed-$U$ gates.

%------------------------------------------------------------------------------%
\subsection{Total number of gates and proof of the main theorem}
\label{section:gate-count}
In this section, we count the number of 1- and 2-qubit gates in our construction. There are three parts that we need to consider: the implementation of the encoding operator $E$, the implementation of the reflections in the oblivious amplitude amplification operator, and the implementation of the multiplexed-$U$ gates.
To complete the proof of the main theorem, all that remains is 
%The only things remains for proving the main theorem is the 
to bound the number of these gates.

  \begin{proof}[Proof of Theorem~\ref{thm:main_thm}]
We first consider the case where $t$ is as defined in Eq.~\eqref{eq:def-t}, so 
$t \pnorm{\textsf{pauli}}{\mathcal{L}} = \ln(2) + O(1/r)$.
The quantum circuit is based on the oblivious amplitude amplification operator $F$, whose correctness is shown by Lemma~\ref{lemma:OAA}. We modify the quantum circuit of $F$ by applying a concentration bound and the encoding scheme on the indicator and purifier registers as shown in subsection~\ref{section:hamming}. In the following, we show that this quantum circuit achieves the desired gate complexity.

For the encoding operator $E$, we first apply the techniques in \cite{BCG14} for $(\ket{0}\ket{\mu})^{\otimes r}$ to prepare the register $\ket{g}$. This can be done with $O\left(h(\log(r)+\log\log(1/\epsilon))\right) = O(\log(r)h)$ gates. In addition, we need to prepare a superposition of the basis states with non-zero Hamming weight in the second and third registers. This can be done with a slightly modified multiplex-$B$ gate and state $\ket{\mu}$ with gate cost $O(m^2q^2)$. Thus, the number of 1- and 2-qubit gates required for the encoding operator $E$ is $O(\log(r)h+m^2q^2) = O(\log(1/\epsilon)h+m^2q^2)$.

	In the oblivious amplitude amplification operator $F$, there are two reflections, $I-2P_0$ and $I-2P_1$, between $\widehat{W}$ and $\widehat{W}^{\dag}$. If we look into the constructions for $\widehat{W}$, the two reflections are between multiplexed-$B$ gates. To translate the operation $\left(\mbox{multi-}B^{\otimes r}\right)(I-2P_1)\left(\mbox{multi-}{B^{\dag}}^{\otimes r}\right)$ to the encoded representation, note that the multiplexed-$B$ gates correspond to the encoding operator $E$ in the encoded representation, and the reflection $I-2P_1$ is the reflection about the initial state $\ket{0}^{\otimes r}\ket{\mu}^{\otimes r}$. Hence in the encoded representation, the corresponding operation is first applying $E^{\dag}$, reflecting about the encoded initial state $\ket{0^a}\ket{0^b}\ket{0^c}$, where $a$, $b$, and $c$ are defined in the last paragraph of subsection~\ref{section:hamming}, and then applying $E$.

    A similar method applies to the operation $\left(\mbox{multi-}B^{\otimes r}\right)(I-2P_0)\left(\mbox{multi-}{B^{\dag}}^{\otimes r}\right)$. The only difference is that the reflection $I-2P_0$ is reflecting about the subspace where the state of the indicator register is $\ket{0}^{\otimes r}$. In the encoded representation, the corresponding reflection in the encoded representation should be about the subspace where the first two registers are in the state $\ket{0^a}\ket{0^b}$. Therefore, the corresponding operation in the encoded representation is first applying $E^{\dag}$, then applying the reflection about the encoded state $\ket{0^a}\ket{0^b}$ on the first two registers, and last applying $E$.

    The number of 1- and 2-qubit gates involved in the two reflections consists of the implementation of the encoding operator $E$, and two reflections. The number of gates for the reflections is of the same order of the number of qubits for the encoded representation. Therefore the number of 1- and 2-qubit gates in this part is $O(\log(1/\epsilon)h+m^2q^2)$.

	Each multiplexed-$U$ gate costs $O(mq^2(\log(mq)+n))$ of 1- and 2-qubit gates, as each controlled-$U$ requires $\log(mq)$ qubits for multiplexing and $O(n)$ Paulis, and we have to implement $O(mq^2)$ these controlled-$U$ gates. Since the number of occurrences of multiplexed-$U$ gates is $h$, the gate cost for this part is $O(mq^2h(\log(mq) + n))$

Therefore, the total number of 1- and 2-qubit gates is
\begin{align}
  O(m^2q^2+\log(1/\epsilon)h + mq^2(\log(mq)+n)h) \in 
  O\left(m^2q^2\frac{(\log(mq/\epsilon)+n)\log(1/\epsilon)}{\log\log(1/\epsilon)}\right).
\end{align}

For arbitrary evolution time $t$, let $\tau:=t\pnorm{\textsf{pauli}}{\mathcal{L}}$. Divide the evolution time into $O(\tau)$ segments. Then run this quantum circuit for a segment with precision $\epsilon/\tau$ and trace out the indicator and purifier registers. Repeat this $O(\tau)$ times and this evolution is simulated with total number of $1$- or $2$-qubit gates $O\left(m^2q^2\tau\frac{(\log(mq\tau/\epsilon)+n)\log(\tau/\epsilon)}{\log\log(\tau/\epsilon)}\right)$.

The distance between $\mathcal{N}$ and $e^{\mathcal{L}t}$ in terms of the diamond norm is established by Eqns.~\eqref{eq:dia-dist-1} and \eqref{eq:dia-dist-2}. Choosing $r$ large enough (i.e., $r=1/\epsilon$), the error of the simulation is within $\epsilon$. Note that the concentration bound and encoding scheme only cause $O(\epsilon)$ error.
\end{proof}

%------------------------------------------------------------------------------%

%------------------------------------------------------------------------------%

\section{Lindbladians with sparse Hamiltonian and Lindblad operators}
\label{sec:sparse}
In this section, we sketch the analysis of the simulation of Lindbladians with $d$-sparse Hamiltonian and Lindblad operators. Without loss of generality, we assume $\norm{H} \geq 1$ and $\norm{L_j} \geq 1$ for $j\in\{1,\ldots,m\}$. We first describe a method to approximate $H$ and $L_j$ as a linear combination of unitaries. Then we sketch the analysis of two key quantities: the normalized evolution time $t\pnorm{\textsf{pauli}}{\mathcal{L}}$, and the number of unitaries $q$ in this approximation

We consider the case where $H$ is $d$-sparse and each $L_j$ is both column and row $d$-sparse given by an oracle. Each $L_j$ can be decomposed as $L_j = \frac{L_j+L_j^{\dag}}{2}+i\frac{L_j-L_j^{\dag}}{2i}$, where $\frac{L_j+L_j^{\dag}}{2}$ and $\frac{L_j-L_j^{\dag}}{2i}$ are Hermitian. For each $\frac{L_j+L_j^{\dag}}{2}$, $\frac{L_j-L_j^{\dag}}{2i}$, and $H$, we use the methods in \cite{BCCKS14} to approximate them as a linear combination of unitaries with equal coefficient  $\gamma$. The error of the approximation is $O(d^2\gamma)$ in terms of the max norm, and the number of unitaries in the approximation is $O\bigl(d^2\pnorm{{\textsf{max}}}{H}/\gamma\bigr)$ for $H$ and $O\left(d^2\pnorm{{\textsf{max}}}{L_j}/\gamma\right)$ for $L_j$.
It is easy to see that $c_0 = O\left(d^2\pnorm{{\textsf{max}}}{H}\right)$ and $c_j = O\left(d^2\pnorm{{\textsf{max}}}{L_j}\right)$ for $j\in\{1,\ldots,m\}$. We have $t\pnorm{\textsf{pauli}}{\mathcal{L}} = t\bigl(c_0+\sum_{j=1}^mc_j^2\bigr) = O\bigl(td^4\pnorm{\textsf{ops}}{\mathcal{L}}\bigr)$.

To bound $q$, which is the number of terms in the LCU decomposition for each of $H$ and $L_j$, we consider the error of this approximation. As each of $H$ and $L_j$ can be approximated with error $O(d^2\gamma)$ in terms of the max norm, $\mathcal{L}$ can be approximated with error $O\bigl(d^3\pnorm{\textsf{ops}}{\mathcal{L}}\gamma\bigr)$ in terms of the diamond norm. To restrict the simulation error within $\epsilon$ for evolution time $t$, $\gamma$ can be chosen so that $\gamma = O\bigl(\epsilon/(td^3\pnorm{\textsf{ops}}{\mathcal{L}})\bigr)$. Therefore, the number of unitaries in the decomposition is bounded by $q=O\bigl(td^5\pnorm{\textsf{ops}}{\mathcal{L}}^2/\epsilon\bigr)$.

In the implementation of the multiplexed-$U$ gates, we no longer need to implement all the $O(mq^2)$ unitaries, since the oracles for $H$ and $L_j$ are given. Also, the cost for implementing the encoding scheme becomes $O(\log(1/\epsilon)h + \poly(n))$ as the coefficients in the LCU for each $H$ and $L_j$ are the same, which saves the $O(m^2q^2)$ factor. Thus the $m$ and $q$ factors in the gate complexity can be eliminated. (The $\log(m)$ and $\log(q)$ factors will be preserved.) Let $\tau=t\pnorm{\textsf{ops}}{\mathcal{L}}$. By our construction, the gate complexity is 
\[
  O\bigl(\tau\,\polylog(mq\tau/\epsilon)\,\poly(n,d)\bigr).
\]

The query complexity is the number of occurrences of the multiplexed-$U$ gates, which is
\[
  O\Biggl(\tau\,\frac{\log(\tau/\epsilon)}{\log\log(\tau/\epsilon)}\,\poly(d)\Biggr).
\]

\section{Acknowledgments}
We thank Andrew Childs, Patrick Hayden, Martin Kliesch, Tongyang Li, Hans Massen, Barry Sanders, and Rolando Somma for helpful discussions. This research was supported in part by Canada's NSERC and an NSERC Canada Graduate Scholarship (Doctoral).

\bibliography{ref}
\bibliographystyle{plain}

\appendix
\section{Cost of expressing Lindblad evolution as Hamiltonian evolution}
\label{appendix:total-is-infinite}

%-------------------------------------------------------------------------------%

Let $\mathcal{L}$ be a Lindbladian acting on an $n$-qubit register $\mathcal{H}$ over a time interval $[0,T]$.
For each initial state, $\mathcal{L}$ associates a \textit{trajectory}, consisting of a density operator $\rho(t)$ for each $t \in [0,T]$.
Here we show that if this is simulated by Hamiltonian evolution in a larger system with an ancillary register that is continually reset (expressed as a limiting case when $N \rightarrow \infty$ in the process illustrated in Fig.~\ref{fig:interleave-LB}) then the total evolution time for this Hamiltonian can be necessarily infinite.

\begin{definition}
Define an \emph{$N$-stage $\epsilon$-precision discretization of $\mathcal{L}$ for interval $[0,T]$} as an ancillary register $\mathcal{K}$, a Hamiltonian $H$ (with $\|H\| = 1$) acting on the joint system $\mathcal{K}\otimes \mathcal{H}$, and $\delta \ge 0$ such that the channel $\mathcal{N}_{H\delta}$ defined as 
\begin{align}
\mathcal{N}_{H\delta}[\rho] = \mathrm{Tr}_{\mathcal{K}}\bigl(e^{-iH\delta}(\ketbra{0}{0}\otimes\rho)e^{iH\delta}\bigr)
\end{align}
has the following property.
$\mathcal{N}_{H\delta}$ approximates evolution under $\mathcal{L}$ in the sense that, for each $k \in \{1,\dots,N\}$,
\begin{align}
\bigl\|(\mathcal{N}_{H\delta})^k - \exp\bigl(\textstyle{\frac{kT}{N}}\mathcal{L}\bigr) \bigr\|_{\diamond} \le \epsilon.
\end{align}
That is, the $N$ points generated by $\mathcal{N}_{H\delta}, (\mathcal{N}_{H\delta})^2,\dots,(\mathcal{N}_{H\delta})^N$ approximate the corresponding points on the trajectory determined by $\mathcal{L}$.
\end{definition}
%-------------------------------------------------------------------------------%
\begin{figure}[!ht]
\centering
\includegraphics[width=0.90\textwidth]{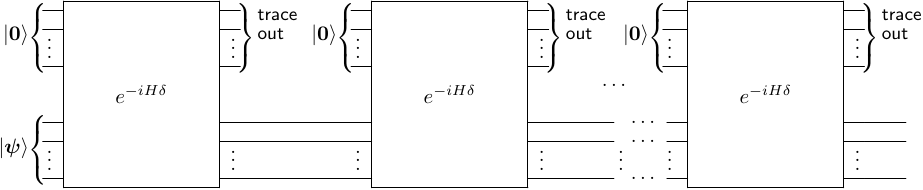}
\caption{\small $N$-stage $\epsilon$-precision discretization of the trajectory resulting from $\mathcal{L}$. For each $k \in \{1,\dots,N\}$, after $k$ stages, the channel should be within $\epsilon$ of $\exp\bigl(\frac{kT}{N}\mathcal{L}\bigr)$.}\label{fig:interleave-LB}
\end{figure}
%-------------------------------------------------------------------------------%

Our lower bound is for the \emph{amplitude damping process} on a 1-qubit system which is the time-evolution described by the Lindbladian $\mathcal{L}$, where
\begin{align}
\mathcal{L}[\rho] = L\rho L^{\dag} -\textstyle{\frac{1}{2}}(L^{\dag}L\rho + \rho L^{\dag}L),
\end{align}
and 
%\begin{align}
$\displaystyle{L = \begin{pmatrix}
 0 & 1 \\ 0 & 0
\end{pmatrix}}$.
%\end{align}

\begin{theorem}\label{thm:Hamiltonian-cost-AD}
Any $\frac{1}{4}$-precision $N$-stage discretization of the amplitude damping process over the time interval $[0,\ln 2]$ has the property that the total evolution time of $H$ is $\Omega(\sqrt{N})$.
(Note that this lower bound is independent of the dimension of the ancillary system.)
\end{theorem}

To prove Theorem~\ref{thm:Hamiltonian-cost-AD}, we first prove the following \emph{Local Hamiltonian Approximation} lemma.
This concerns a scenario where $H$ is a Hamiltonian acting on a joint system of two registers, a system register $\mathcal{H}$ and an ancillary register 
$\mathcal{K}$, and where $\mathcal{K}$ is traced out after this evolution.
Informally, the lemma states that, if the initial state is a product state and the evolution time is short, then this process can  be approximated by the evolution of another Hamiltonian $G$ that acts on $\mathcal{H}$ alone.
This is illustrated in Fig.~\ref{fig:local-hamiltonian-approx}.
%-------------------------------------------------------------------------------%
\begin{figure}[ht!]
\centering
\includegraphics[width=0.80\textwidth]{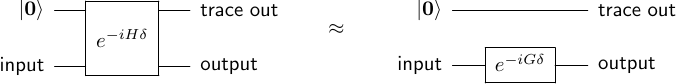}\caption{The Local Hamiltonian Approximation Lemma. The first register is $d$-dimensional, the second register contains $n$ qubits, and the approximation is within $O(\delta^2)$ (independent of $d$ and $n$).}\label{fig:local-hamiltonian-approx}
\end{figure}
%-------------------------------------------------------------------------------%

%-------------------------------------------------------------------------------%
\begin{lemma}[Local Hamiltonian approximation]\label{lem:local-unitary-approximation}
Let $\mathcal{H}$ be an $n$-qubit register and $\mathcal{K}$ a $d$-dimensional register.
Let $H$ be a Hamiltonian (with $\|H\|=1$)  acting on the joint system $\mathcal{K}\otimes\mathcal{H}$.
Define the $n$-qubit channel $\mathcal{N}_{H\delta}$ as
\begin{align}
\mathcal{N}_{H\delta}[\rho] = \mathrm{Tr}_{\mathcal{K}}\bigl(e^{-iH\delta}(\ketbra{0}{0}\otimes\rho)e^{iH\delta}\bigr).
\end{align}
Then there exists a Hamiltonian $G$ (with $\|G\|=1$), acting on $\mathcal{H}$ alone, such that $\mathcal{N}_{G\delta}$ defined as 
\begin{align}
\mathcal{N}_{G\delta}[\rho] = e^{-iG\delta}\rho \, e^{iG\delta}
\end{align}
satisfies
$\| \mathcal{N}_{H\delta} - \mathcal{N}_{G\delta} \|_1 \in O(\delta^2)$.
(The notation $\|\cdot\|_1$ indicates the induced trace norm, which is sufficient for our purposes because our application is a lower bound.)
\end{lemma}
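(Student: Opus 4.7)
The plan is to Taylor expand both sides to second order in $\delta$ and read off the candidate Hamiltonian $G$ from the first-order term of $\mathcal{N}_{H\delta}$, then verify that the match at first order is enough to give the $O(\delta^2)$ trace-norm bound uniformly in the ancilla dimension.

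First, write $H$ in block form with respect to the basis $\{|j\rangle\}_{j=0}^{d-1}$ of $\mathcal{K}$, so $H = \sum_{j,k} |j\rangle\!\langle k|\otimes H_{jk}$ with each $H_{jk}$ acting on $\mathcal{H}$ and $H_{jk}^\dagger = H_{kj}$. Define the candidate $G = H_{00} = (\langle 0|\otimes I)H(|0\rangle\otimes I)$, which is Hermitian and, as a compression of $H$, satisfies $\|G\| \le \|H\| = 1$ (if one insists on equality, one may replace $G$ by $G + c\Pi$ for a suitable rank-one correction on $\mathcal{H}$, which only changes $\mathcal{N}_{G\delta}$ by $O(\delta^2)$).

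Next, expand
\begin{align}
e^{-iH\delta}(|0\rangle\!\langle 0|\otimes\rho)e^{iH\delta} = (|0\rangle\!\langle 0|\otimes\rho) - i\delta[H,\,|0\rangle\!\langle 0|\otimes\rho] + R_H(\delta),
\end{align}
where the remainder satisfies $\|R_H(\delta)\|_1 \le C\delta^2\|H\|^2 \|\rho\|_1$ for an absolute constant $C$; this comes from the standard integral remainder for the matrix exponential and uses only $\|H\|=1$, so it is independent of $d$ and $n$. Taking $\mathrm{Tr}_{\mathcal{K}}$ and using $\mathrm{Tr}_{\mathcal{K}}(|j\rangle\!\langle 0|\otimes M) = \delta_{0j}M$, the first-order term collapses to $-i\delta[H_{00},\rho] = -i\delta[G,\rho]$, since only the $(0,0)$ block of $H$ survives the partial trace against $|0\rangle\!\langle 0|$. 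An analogous expansion of $\mathcal{N}_{G\delta}[\rho] = e^{-iG\delta}\rho\, e^{iG\delta}$ yields $\rho - i\delta[G,\rho] + R_G(\delta)$ with $\|R_G(\delta)\|_1 \le C'\delta^2\|G\|^2\|\rho\|_1 \le C'\delta^2\|\rho\|_1$.

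Subtracting, the zeroth- and first-order terms cancel exactly, so
\begin{align}
\mathcal{N}_{H\delta}[\rho] - \mathcal{N}_{G\delta}[\rho] = \mathrm{Tr}_{\mathcal{K}}(R_H(\delta)) - R_G(\delta),
\end{align}
and the triangle inequality together with contractivity of the partial trace on the trace norm gives $\|\mathcal{N}_{H\delta}[\rho]-\mathcal{N}_{G\delta}[\rho]\|_1 \le (C+C')\delta^2 \|\rho\|_1$, hence $\|\mathcal{N}_{H\delta}-\mathcal{N}_{G\delta}\|_1 \in O(\delta^2)$, with the implicit constant depending on neither $d$ nor $n$ since it uses only $\|H\|,\|G\|\le 1$. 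The one place needing a little care, and the main conceptual point, is ensuring the remainder bound is dimension-free; this is why the estimate is phrased via the induced trace norm rather than the diamond norm, and why the explicit integral form of the remainder (bounded by $\tfrac{\delta^2}{2}\|H\|^2$) is used instead of spectral arguments that could pick up dimension factors.
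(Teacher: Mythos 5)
Your proof is correct, and it takes a genuinely different route from the paper's. You identify the same candidate $G = H_{00} = (\bra{0}\otimes I)H(\ket{0}\otimes I)$, but you get the bound by directly Taylor-expanding both superoperators to first order with an integral remainder, observing that $\mathrm{Tr}_{\mathcal{K}}\bigl([H,\ketbra{0}{0}\otimes\rho]\bigr) = [H_{00},\rho]$ because only the $(0,0)$ block survives the partial trace, and using $\|[H,[H,X]]\|_1 \le 4\|H\|^2\|X\|_1$ to keep the remainder dimension-free. The paper instead splits $H = D + J$ into its block-diagonal and block-off-diagonal parts, uses the Trotter-type estimate $\|e^{-iH\delta} - e^{-iD\delta}e^{-iJ\delta}\| \le \delta^2$, and argues that $e^{-iJ\delta}$ leaks only $O(\delta^2)$ probability out of the $\ket{0}$ ancilla sector, so that after tracing out $\mathcal{K}$ only the $e^{-iH_{00}\delta}$ action of $D$ remains. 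Your version is arguably cleaner: it works directly for arbitrary $Q$ with $\|Q\|_1 = 1$ (the paper's leakage argument is phrased for pure inputs), and it makes the dimension-independence of the constant completely explicit. One small correction: your parenthetical fix for enforcing $\|G\|=1$ does not work as stated, since a rank-one correction $c\Pi$ with $c = \Theta(1)$ perturbs $\mathcal{N}_{G\delta}$ by $\Theta(\delta)$, not $O(\delta^2)$; the right fix is to add a multiple of the identity, $G \mapsto G + cI$, which leaves the channel $e^{-iG\delta}\rho\,e^{iG\delta}$ entirely unchanged and can always shift the spectrum of $G$ (whose width is at most $2\|H\| = 2$) so that its maximum absolute eigenvalue equals $1$. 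This normalization is cosmetic for the downstream application, and the paper itself does not address it either.
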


%-------------------------------------------------------------------------------%
\begin{proof}
Viewing $H$ as a $d \times d$ block matrix, we have
\begin{align}
H = \sum_{j=0}^{d-1} \sum_{k=0}^{d-1} \ketbra{j}{k}\otimes H_{jk}
\end{align}
and we refer to $H_{jk}$ as the \textit{$(j,k)$-block}.
Define $D$ as the diagonal blocks of $H$, namely
\begin{align}
D = \sum_{j=0}^{d-1} \ketbra{j}{j}\otimes H_{jj},
\end{align}
and set $J = H - D$ (the off-diagonal blocks).
Note that $\|D\| \leq 1$, $\|J\| \leq 2$, and $\|e^{-iH\delta} - e^{-iD\delta} e^{-iJ\delta}\| \le \delta^2$, for $\delta > 0$, which permits us to consider the effect of $J$ and $D$ separately.

Now consider the state $e^{-iJ\delta}\ket{0}\otimes\ket{\psi}$.
We will show that, if the measurement corresponding to projectors $\ketbra{0}{0}$ and 
$I-\ketbra{0}{0}$ is performed on register $\mathcal{K}$, then the residual state has trace distance $O(\delta^2)$ from $\ket{0}\otimes\ket{\psi}$.
Since the $(0,0)$-block of $J$ is $0$, 
\begin{align}
J\delta\, \ket{0}\otimes\ket{\psi} = \delta'\ket{\Psi^\perp}, 
\end{align}
where $\ket{\Psi^\perp}$ is a state such that $(\ketbra{0}{0}\otimes I)\ket{\Psi^\perp} = 0$ and $0\le \delta' \le \delta$.
Therefore,
\begin{align}
e^{-iJ\delta}\ket{0}\otimes\ket{\psi} 
&=
\sum_{r=0}^{\infty} \frac{(-iJ\delta)^r}{r!} \ket{0}\otimes\ket{\psi}\\
&= \ket{0}\otimes\ket{\psi} - i\delta'\ket{\Psi^\perp} + \delta''\ket{\Phi},
\end{align}
where $0 \le \delta'' \le e^{\delta}-1-\delta \in O(\delta^2)$.
It follows that, if the above measurement is performed on register $\mathcal{K}$, then the probability of measurement outcome $I - \ketbra{0}{0}$ is at most 
$(\delta')^2 + (\delta'')^2 \in O(\delta^2)$.
This implies that the state when register $\mathcal{K}$ of $e^{-iJ\delta}\ket{0}\otimes\ket{\psi}$ is traced out, namely
\begin{align}
\Tr_{\mathcal{K}}\bigl(e^{-iJ\delta}(\ketbra{0}{0}\otimes\ketbra{\psi}{\psi})e^{iJ\delta}\bigr),
\end{align}
has trace distance $O(\delta^2)$ from the original state $\ketbra{\psi}{\psi}$.

Therefore, for states of the form $\ket{0}\otimes\ket{\psi}$, the operation $e^{-iH\delta}$ can be approximated by $e^{-iD\delta}$ at the cost of an error of $O(\delta^2)$ in trace distance.
The result follows by setting $G = H_{00}$ (the $(0,0)$-block of $D$).
\end{proof}

\begin{proof}[Proof of Theorem~\ref{thm:Hamiltonian-cost-AD}]
It is straightforward to check that, starting with the initial state $\ketbra{1}{1}$ and evolving by the amplitude 
damping process for time $T = \ln 2$ produces the maximally mixed state.

Consider any $\frac{1}{4}$-precision $N$-stage discretization of this process, with Hamiltonian $H$ and $\delta > 0$. We can apply the Local Hamiltonian Approximation Lemma (Lemma~\ref{lem:local-unitary-approximation}) to approximate each of the $N$ evolutions of $H$ with evolution by a Hamiltonian $G$ that is local to the system register.
The result is unitary evolution of the qubit that approximates the amplitude damping process within trace distance error at most $O(N\delta^2)$.

Unitary evolution applied to $\ketbra{1}{1}$ results in a pure state, 
and the trace distance between any pure state and the maximally mixed state is $\frac{1}{2}$.
Therefore, to avoid a contradiction, we must have $N\delta^2 \in \Omega(1)$, which implies that $\delta \in \Omega(1/\sqrt N)$.
Therefore, the total evolution time of $H$ is $N \delta \in \Omega(\sqrt N)$.
\end{proof}

%------------------------------------------------------------------------------%

\section{Proof that $\|e^{\delta\mathcal{L}} - (\mathbbm{1} + \delta\mathcal{L})\|_{\diamond} \le (\delta\|\mathcal{L}\|_{\diamond})^2$ for small $\delta$}\label{app:1plusL}

Assume that $0 \le \delta \|\mathcal{L}\|_{\diamond} \le 1$.
Then, for any $X$ such that $\| X \|_1 \le 1$, 
\begin{align*}
\bigl\|(e^{\delta\mathcal{L}} - (\mathbbm{1} + \delta\mathcal{L}))[X]\bigr\|_1
&=
\Biggl\|\sum_{s=2}^{\infty} \frac{\delta^s}{s!}\mathcal{L}^{(s)}[X]\Biggr\|_1 \\
&\le 
\sum_{s=2}^{\infty} \frac{\delta^s}{s!}\bigl\|\mathcal{L}^{(s)}[X]\bigr\|_1 \\
&\le 
\sum_{s=2}^{\infty} \frac{\delta^s}{s!}\bigl(\|\mathcal{L}[X]\|_1\bigr)^s \\
&\le \bigl(\delta\|\mathcal{L}[X]\|_1\bigr)^2 \\
&\le \bigl(\delta\|\mathcal{L}\|_1\bigr)^2 ,
\end{align*}
where we are using the fact that $e^z - (1+z)\le z^2$ when $0 \le z \le 1$.

To extend this from the induced trace norm to the diamond norm, note that, for two registers $\mathcal{H}$ and $\mathcal{K}$, 
\begin{align*}
(e^{\delta\mathcal{L}} - (\mathbbm{1}_{\mathcal{H}} + \delta\mathcal{L}))\otimes \mathbbm{1}_{\mathcal{K}}
&=
e^{\delta(\mathcal{L}\otimes \mathbbm{1}_{\mathcal{K}})} - (\mathbbm{1}_{\mathcal{HK}}
 + \delta(\mathcal{L}\otimes \mathbbm{1}_{\mathcal{K}}))
\end{align*}
and, $\mathcal{L}\otimes \mathbbm{1}_{\mathcal{K}}$ is a Lindbladian with
$\|\mathcal{L} \otimes \mathbbm{1}_{\mathcal{K}}\|_1 = \|\mathcal{L}\|_{\diamond}$ when the dimensions of $\mathcal{H}$ and $\mathcal{K}$ are equal.
This implies
\begin{align*}
\Bigl\|e^{\delta\mathcal{L}} - (\mathbbm{1}_{\mathcal{H}} + \delta\mathcal{L})\Bigr\|_{\diamond}
&=
\Bigl\|\bigl(e^{\delta\mathcal{L}} - (\mathbbm{1}_{\mathcal{H}} + \delta\mathcal{L})\bigr)\otimes \mathbbm{1}_{\mathcal{K}}\Bigr\|_1 \\ 
&=
\Bigl\|e^{\delta(\mathcal{L}\otimes \mathbbm{1}_{\mathcal{K}})} - (\mathbbm{1}_{\mathcal{HK}}
 + \delta(\mathcal{L}\otimes \mathbbm{1}_{\mathcal{K}}))\Bigr\|_1 \\
&\le 
\bigl(\delta\|\mathcal{L}\otimes \mathbbm{1}_{\mathcal{K}}\|_1)^2 \\
&= 
\bigl(\delta\|\mathcal{L}\|_{\diamond})^2.
\end{align*}

%------------------------------------------------------------------------------%

\end{document}